\documentclass[lettersize,journal]{IEEEtran}
\IEEEoverridecommandlockouts
% The preceding line is only needed to identify funding in the first footnote. If that is unneeded, please comment it out.

\usepackage[final]{changes}
\newcommand{\stkout}[1]{\ifmmode\text{\sout{\ensuremath{#1}}}\else\sout{#1}\fi}
\setdeletedmarkup{\stkout{#1}}

\usepackage{cite}
\usepackage{amsmath,amssymb,amsfonts}
\usepackage{algorithmic}
\usepackage{graphicx}
\usepackage{textcomp}
\usepackage{xcolor}
\def\BibTeX{{\rm B\kern-.05em{\sc i\kern-.025em b}\kern-.08em
    T\kern-.1667em\lower.7ex\hbox{E}\kern-.125emX}}

\usepackage{mathrsfs}
\usepackage{amsthm}
\usepackage{verbatim}
\usepackage{dsfont}
\usepackage{bm}

\usepackage{caption}

\newtheorem{theorem}{Theorem}
\newtheorem{lemma}{Lemma}
\newtheorem{corollary}{Corollary}

\theoremstyle{definition}
\newtheorem{remark}{Remark}
\newtheorem{definition}{Definition}
\newtheorem{example}{Example}

\usepackage{algorithm}
\usepackage{bbm}
\usepackage{hyperref}
\hypersetup{colorlinks = true
            linkcolor=red,
            anchorcolor=blue,
            citecolor=green}
              
\newcommand{\MF}{\mathcal{F}}
\newcommand{\MI}{\mathcal{I}}
\newcommand{\MR}{\mathcal{R}}
\newcommand{\mM}{\mathcal{M}}

\newcommand{\FF}{\mathbb{F}}

\begin{document}

\title{On the Distribution of Weights Less than $2w_{\min}$ in Polar Codes}

 \author{
   \IEEEauthorblockN{Zicheng Ye\IEEEauthorrefmark{2}\IEEEauthorrefmark{3}\IEEEauthorrefmark{1}, 
   Yuan Li\IEEEauthorrefmark{2}\IEEEauthorrefmark{3}\IEEEauthorrefmark{1}, 
   Huazi Zhang\IEEEauthorrefmark{1}, 
   Jun Wang\IEEEauthorrefmark{1}, 
   Guiying Yan\IEEEauthorrefmark{2}\IEEEauthorrefmark{3}, 
   and Zhiming Ma\IEEEauthorrefmark{2}\IEEEauthorrefmark{3} } \\
  \IEEEauthorblockA{\IEEEauthorrefmark{2}
                     School of Mathematical Sciences, University of Chinese Academy of Sciences} \\
   \IEEEauthorblockA{\IEEEauthorrefmark{3} 
                     Academy of Mathematics and Systems Science, CAS } \\
  \IEEEauthorblockA{\IEEEauthorrefmark{1}
                     Huawei Technologies Co. Ltd.} \\
  Email: \{yezicheng3, liyuan299, zhanghuazi, justin.wangjun\}@huawei.com,\\
           yangy@amss.ac.cn, mazm@amt.ac.cn  
           \thanks{This work was supported by the National Key R\&D Program of China (2023YFA1009602).}
           }

\maketitle

\begin{abstract}
The number of low-weight codewords is critical to the performance of error-correcting codes. In 1970, Kasami and Tokura characterized the codewords of Reed-Muller (RM) codes whose weights are less than $2w_{\min}$, where $w_{\min}$ represents the minimum weight. In this paper, we extend their results to decreasing polar codes. We present the closed-form expressions for the number of codewords in decreasing polar codes with weights less than $2w_{\min}$. Moreover, the proposed enumeration algorithm runs in polynomial time with respect to the code length.
\end{abstract}

\begin{IEEEkeywords}
Polar codes, weight distribution, codewords with weight less than twice the minimum weight, polynomial representation.
\end{IEEEkeywords}

\section{Introduction}
\IEEEPARstart{P}{olar} codes \cite{Arikan2009}, introduced by Ar{\i}kan, are a significant breakthrough in coding theory. As the code length approaches infinity, polar codes can approach channel capacity under successive cancellation (SC) decoding. For short to moderate block lengths, successive cancellation list (SCL) decoding \cite{Niu2012, Tal2015} can significantly improve the error-correcting performance. In fact, SCL algorithms can approach the maximum likelihood (ML) decoding performance when the list size is large enough.

The weight distribution of linear codes has a significant impact on the ML decoding performance, which can be estimated accurately through the union bound with the number of low-weight codewords \cite{Sason2006}. Due to the importance of the weight distribution, many researches have approximated the weight distribution of different codes \cite{Krasikov1995, Di2006, Kasami1985}. However, in the general case, the complexity of computing the exact weight distribution grows exponentially with the code length. 

For polar codes, since the number of low-weight codewords affects the performance of SCL decoding with a large list size, there have been many attempts to study the weight distribution of polar codes. In \cite{Li2012}, SCL decoding with a very large list size at a high SNR was proposed to collect the low-weight codewords. This method was improved in \cite{Liu2014} to save memory. In \cite{Valipour2013, Zhang2017}, polynomial-complexity probabilistic approaches were proposed to approximate the weight distribution of polar codes. The authors in \cite{Yao2023} designed an algorithm to calculate the exact weight distribution of original or specific pre-transformed polar codes \cite{Li2019} by coset method, but the complexity is still exponential. In \cite{Li2021, Li2023}, the authors proposed efficient recursive formulas to calculate the average weight spectrum of pre-transformed polar codes with polynomial complexity.

RM codes \cite{Reed1953, Muller1954} are closely related to polar codes. Existing researches on the weight distribution of RM codes may shed light on the characterization of polar code weight spectrum. In \cite{Kasami1970}, the numbers of codewords with weights less than $2w_{\min}$ in RM codes were determined, and the result was then generalized to $2.5w_{\min}$ in \cite{Kasami1976}, where $w_{\min}$ is the minimum weight.
  
The authors in \cite{Bardet2016} regarded polar codes as decreasing monomial codes and used the lower triangular affine transformation automorphism to calculate the number of codewords with weight $w_{\min}$. Recently, the same method was applied to calculate the number of codewords with weight $1.5w_{\min}$ \cite{Rowshan2023}.\deleted{However, it seems difficult to further extend this method to the number of codewords with other weights.}\added{Their methods rely on the observation that the codewords with weight $w_{\min}$ and $1.5w_{\min}$ can be obtained by lower triangular affine transformations of a single row or a summation of two rows. However, this property no longer holds for codewords with weight larger than 1.5$w_{\min}$.} Therefore, we propose a unified  method to calculate the number of codewords with weight less than $2w_{\min}$. \added{A detailed comparison between our method and the existing results is provided in subsection \ref{ss24}.}

In this paper, we generalize the results in \cite{Kasami1970} and provide closed-form expressions for the number of codewords with weights less than $2w_{\min}$ in decreasing polar codes. A decreasing polar code is a subcode of some RM code, possessing the same minimum weight. In brief, our task is to select those low-weight codewords that belong to the polar code. However, it is worth noting that the enumeration procedure for polar codes is more complex compared to that of RM codes. We divide codewords into disjoint subsets based on the largest terms of their monomial representations. The size of subsets grows logarithmically with the code length, while the number of subsets is less than the square of code length. As a result, the time complexity of the enumeration algorithm is almost proportional to the square of code length. 

The rest of this paper is organized as follows. In Section II, we provide a concise introduction to polar codes and RM codes. In addition, we outline our proof. In section III, we classify and enumerate the codewords with weight less than $2w_{\min}$, and provide algorithms to compute the weight distribution. In section IV, the closed-form expressions for the number of codewords and the algorithm complexity is presented. Finally, in section V, we draw some conclusions.

\section{Preliminaries}

\subsection{Polar codes as monomial codes}
Let $\replaced{G}{F}=\begin{bmatrix} 1&0 \\ 1&1 \end{bmatrix}$ and $\replaced{G_N}{F_N}=\replaced{G}{F}^{\otimes m}$, where $\otimes$ is Kronecker product:
$$
A \otimes B =\left[\begin{array}{ccc}
a_{11} B & \cdots & a_{1 k} B \\
\vdots & \ddots & \vdots \\
a_{t 1} B & \cdots & a_{t k} B
\end{array}\right].
$$
Polar codes can be constructed by selecting $K$ rows of $G_N$ as the information set $\mathcal{I}$. $\MF = \MI^c$ is called the frozen set. Denote the polar code with information set $\MI$ by $C(\MI)$. 

Polar codes can be described as monomial codes \cite{Bardet2016}. The monomial set is denoted by
$$
\mM \triangleq \{x_1^{a_1}...x_{m}^{a_{m}}\mid (a_1,...,a_{m})\in\FF_2^m\}.
$$
Let $e = x_1^{a_1}...x_{m}^{a_m}$ be a monomial in $\mM$, the degree of $e$ is defined as $\text{deg}(e) = \sum_{i=1}^m a_i$. \added{In particular, $1$ is a monomial with degree zero.} The polynomial set is denoted by
$$
\MR_{\mM} \triangleq \{\sum_{e\in \mM} a_e e \mid a_e\in\FF_2\},
$$
and the degree of the non-zero polynomial $g =\sum_{e\in \mM} a_e e$ is defined as $\text{deg}(g) = \max_{a_e\neq 0} \{\text{deg}(e)\}$, and the zero polynomial has no degree. 

A polynomial $g$ is said to be linear if deg$(g) = 1$. We say a linear polynomial $g$ is a linear factor of \added{a non-zero polynomial} $f$ if $f = gh$\added{, where $h$ is a polynomial}. Linear polynomials $g_1,...,g_i$ are said to be linearly independent if $a_1g_1+\cdots+a_ig_i+a_0=0$ implies $a_0=a_1=\cdots=a_i=0$. 

\begin{example}
$x_1,\cdots,x_m$ are linearly independent polynomials. $x_1,x_2,x_1+x_2+1$ are not linearly independent since $x_1+x_2+(x_1+x_2+1)+1=0$.
\end{example}

For a linear polynomial $f$, define the largest term of $f$ to be the largest $x_i$ appearing in $f$, and define $F(f)$ to be its index, i.e., if $f = \sum_{i=1}^m a_ix_i + a_0$, $F(f)=\max\{i\mid a_i\neq 0\}$.  Notice that linear polynomials with different largest terms are linearly independent.

Let $b\in\FF_2^m$. Denote $g(b)$ to be the evaluation of $g$ at point $b$.
We say a linear polynomial $g$ is a linear factor of $f$ if $f = gh$, i.e., for any $b\in\FF_2^m$, $g(b)=0$ implies $f(b)=0$.

The length-$N=2^m$ evaluation vector of $g\in \MR_{\mM}$ is denoted by
$$
\replaced{\text{ev}}{\text{eval}}(g) \triangleq (g(x))_{x\in\FF_2^m}.
$$
\added{And the weight of $g$ is defined as the Hamming weight of $\text{ev}(g)$.}

In fact, each row of $\replaced{G_N}{F_N}$ can be expressed by $\text{ev}(e)$ for some $e\in\mM$. To see this, for each $z\in \{0,1,...,2^m-1\}$, there is a unique binary $a=(a_1,...,a_m)$ of $2^m-1-z$, where $a_1$ is the least significant bit, such that
\[\sum_{i=1}^{m} 2^{i-1}a_i = 2^m-1-z.\]
Then we can check that $\text{ev}(x_1^{a_1}...x_m^{a_m})$ is exactly the $(2^m-z-1)-th$ row of $\replaced{G_N}{F_N}$. As seen, the three forms, i.e., the integer $z$, the binary representation of $2^m-1-z = (a_1,...,a_m)$ and the corresponding monomial $x_1^{a_1}...x_{m}^{a_{m}}$ all refer to the same thing. Therefore, the information set $\MI$ can be regarded as a subset of the monomial set $\mM$.

\begin{definition}
Let $\MI$ be a set of monomials. The monomial code $C(\MI)$ with code length $N = 2^m$ is defined as
$$
C(\MI) \triangleq \operatorname{span} \{\operatorname{ev}(e): e \in \MI\}.
$$
\end{definition}

If the maximum degree of monomials in $\MI$ is $r$, we say the monomial code $C(\MI)$ is $r$-$th$ order. As mentioned, we use the polynomials in $ \MR_{\mM}$ to represent the corresponding codewords in monomial codes.

\begin{example}

The example shows the row vector representations of $F_8$.

$$
\begin{array}{lc}
\begin{array}{c} x_1= \\ x_2=\\ x_3 =\\ \hline x_1x_2x_3 \\ x_2x_3 \\ x_1x_3 \\ x_3 \\ x_1x_2\\ x_2 \\ x_1 \\ 1
\end{array}
\begin{array}{cccccccc}
1 &0 &1 &0 &1 &0 &1 &0\\
1 &1 &0 &0 &1 &1 &0 &0\\
1 &1 &1 &1 &0 &0 &0 &0 \\
\hline
1 & 0 & 0 & 0 & 0 & 0 & 0 & 0\\
1 & 1 & 0 & 0 & 0 & 0 & 0 & 0\\
1 & 0 & 1 & 0 & 0 & 0 & 0 & 0\\
1 & 1 & 1 & 1 & 0 & 0 & 0 & 0\\
1 & 0 & 0 & 0 & 1 & 0 & 0 & 0\\
1 & 1 & 0 & 0 & 1 & 1 & 0 & 0\\
1 & 0 & 1 & 0 & 1 & 0 & 1 & 0\\
1 & 1 & 1 & 1 & 1 & 1 & 1 & 1
\end{array}
\end{array}
$$

If the information set is $\MI=\{1,x_1,x_2,x_3\}$, then the generator matrix of the polar code $C(\MI)$ is

$$
\begin{bmatrix}
1 & 1 & 1 & 1 & 0 & 0 & 0 & 0\\
1 & 1 & 0 & 0 & 1 & 1 & 0 & 0\\
1 & 0 & 1 & 0 & 1 & 0 & 1 & 0\\
1 & 1 & 1 & 1 & 1 & 1 & 1 & 1
\end{bmatrix}
$$

\end{example}

\subsection{Decreasing monomial codes}

The partial order of monomials was defined in \cite{Bardet2016} and  \cite{Schurch2016}. Two monomials with the same degree are ordered as $x_{i_1}...x_{i_t}\preccurlyeq x_{j_1}...x_{j_t}$ if and only if $i_l \leq j_l$ for all $l\in\{1,...,t\}$, where we assume $i_1 <...< i_t$ and $j_1 <...< j_t$. This partial order is extended to monomials with different degrees through divisibility, namely $e \preccurlyeq e'$ if and only if there is a divisor $e''$ of $e'$ such that $e \preccurlyeq e''$. In fact, $e \preccurlyeq e'$ means $e$ is universally more reliable than $e'$. 

An information set $\MI\subseteq \mathcal{M}$ is decreasing if $\forall e\preccurlyeq e'$ and $e'\in\MI$ we have $e \in \MI$. A decreasing monomial code $C(\MI)$ is a monomial code with a decreasing information set $\MI$. If the information set is selected according to the Bhatacharryya parameter or the polarization weight (PW) method \cite{He2017}, the polar codes are decreasing. Decreasing polar codes $C(\MI)$ can be generated by $\MI_{\text{min}}$, when $\MI$ is the smallest decreasing set containing $\MI_{\text{min}}$, i.e., $\MI = \{e\in \mM \mid \exists\ e'\in \MI_{\text{min}}, e\preccurlyeq e'\}$. From now on, we always suppose $\MI$ is decreasing.

\subsection{Weight distribution of RM codes} 

\begin{definition}
The $r$-$th$ order RM code $\operatorname{RM}(m, r)$ is defined as
$$
\operatorname{RM}(m, r) \triangleq \operatorname{span}\left\{\operatorname{ev}(e): e \in \mM, \operatorname{deg}(e) \leq r\right\}.
$$
\end{definition}

Clearly, RM codes are decreasing monomials codes. It is well-known that the minimum weight of non-zero polynomials in $\operatorname{RM}(m, r)$ is $2^{m-r}$.

The number of codewords with weight $i$ in $C$ is denoted as
$$
A_{i}(C)  \triangleq |\{c \in C \mid \mathrm{wt}(c)=i\}|.
$$
The sequence $\left(A_{i}(C)\right)_{i=0}^{N}$ is called the weight distribution of $C$. 

The weight distribution as well as the polynomials in RM codes with weight less than $2w_{\min}$ is presented in \cite{Kasami1970}.

\begin{theorem}[\cite{Kasami1970}] \label{thm1}

Let $f$ be a non-zero polynomial in $\operatorname{RM}(m, r)$. If the weight of $f$ is less than $2^{m-r+1}$, then $\mathrm{wt}(f) = 2^{m-r+1}-2^{m-r+1-\mu}$, where $\mu$ is a positive integer. Moreover, $f$ can be written as one of the following forms:
\begin{align}\label{eq1}
& P(g_1,\cdots, g_{r+2 \mu-2})  \triangleq  \notag \\
& g_1 \cdots g_{r-2}(g_{r-1} g_r + \added{g_{r+1} g_{r+2} +} \cdots+g_{r+2 \mu-3} g_{r+2 \mu-2}), \notag \\ 
& m-r+2  \geq 2 \mu \geq 2,
\end{align}
or
\begin{align}\label{eq2}
& Q(g_1,\cdots, g_{r+\mu})  \triangleq  \notag \\
& g_1 \cdots g_{r-\mu}(g_{r-\mu+1}  \cdots g_r + g_{r+1} \cdots g_{r+\mu}), \notag \\ 
& m \geq r+\mu, r \geq \mu \geq 3.
\end{align}
\replaced{Here}{here} $g_1,g_2,\cdots$ are linearly independent linear polynomials. We call the Equation (\ref{eq1}) the type-I polynomials, and Equation (\ref{eq2}) the type-II polynomials. 

\end{theorem}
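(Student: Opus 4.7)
The plan is to proceed by induction on $m$, exploiting the Plotkin $(u \mid u+v)$ decomposition of $\operatorname{RM}(m,r)$. Every polynomial $f \in \operatorname{RM}(m,r)$ can be written uniquely as $f = u + x_m v$ with $u \in \operatorname{RM}(m-1,r)$ and $v \in \operatorname{RM}(m-1,r-1)$, so that $\operatorname{ev}(f)$ is the concatenation of $\operatorname{ev}(u)$ and $\operatorname{ev}(u+v)$ and
\[
\operatorname{wt}(f) = \operatorname{wt}(u) + \operatorname{wt}(u+v).
\]
Before running the induction I would exploit that the invertible affine group of $\mathbb{F}_2^m$ acts on $\operatorname{RM}(m,r)$ by permuting the $2^m$ evaluation points and hence preserves weight; this freedom lets me normalize $f$ so that any particular linear factor I wish to isolate can be taken to be a coordinate function $x_i$, keeping intermediate polynomials in a canonical form.

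Next I would split the inductive step by whether $v = 0$. If $v = 0$, then $\operatorname{wt}(f) = 2\operatorname{wt}(u)$, so $\operatorname{wt}(u) < 2^{(m-1)-r+1}$ and the inductive classification applied in $\operatorname{RM}(m-1,r)$ directly yields one of the two listed forms for $u = f$, with $\mu$ unchanged. If $v \neq 0$, then $\operatorname{wt}(v) \geq 2^{m-r}$ since $v \in \operatorname{RM}(m-1,r-1)$ has minimum weight $2^{m-r}$, and the inequality $\operatorname{wt}(u) + \operatorname{wt}(u+v) \geq \operatorname{wt}(v)$ combined with $\operatorname{wt}(f) < 2 \cdot 2^{m-r}$ shows both that $v$ itself lies in the low-weight regime of $\operatorname{RM}(m-1,r-1)$ and that the support of $u$ is nearly contained in that of $v$. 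Applying the inductive hypothesis to $v$ produces its factored representation, and analyzing the residual $u$ on the support of $v$ and on its complement determines whether multiplying by $x_m$ simply extends an existing product of linear factors, enlarges a type-I bilinear sum by one more pair $g_{r+2\mu-3} g_{r+2\mu-2}$, or promotes a type-I expression to a type-II expression.

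The step I expect to be the main obstacle is the rigidity statement that no weights strictly between $2^{m-r}$ and $2^{m-r+1}$ appear outside the discrete values $2^{m-r+1} - 2^{m-r+1-\mu}$, and that no polynomials of these weights take shapes other than $P$ or $Q$. Concretely, after extracting a maximal system $g_1,\ldots,g_s$ of linearly independent linear factors of $f$, one must show that the cofactor $h = f/(g_1\cdots g_s)$ is supported on a lower-dimensional affine subspace and has a weight of exactly $2^{\mu+1}-2$ or $2^{\mu+1}$ there, and that this tight weight constraint forces $h$ to be either a nondegenerate $\mathbb{F}_2$-quadratic form of rank $2\mu-2$ (giving type~I) or a sum of two disjoint $\mu$-fold products of linear polynomials (giving type~II). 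Establishing this dichotomy reduces to classifying bilinear and split forms over $\mathbb{F}_2$, and is the technical heart of the argument; once it is in place, the displayed weight formula and the closed-form expressions in \eqref{eq1} and \eqref{eq2} follow by straightforward bookkeeping of the factors $g_i$.
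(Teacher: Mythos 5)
First, a point of comparison that matters for this review: the paper does not prove Theorem \ref{thm1} at all. It is imported verbatim from \cite{Kasami1970} and used as a black box (the paper's appendices prove only Lemmas \ref{lemma2}--\ref{lemma4} and Theorems \ref{thm:m1}, \ref{thm:m2}). So there is no in-paper proof to measure your attempt against; your proposal has to stand on its own as a proof of the classical Kasami--Tokura theorem, and judged that way it has a genuine gap.

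The decisive gap is in the branch $v \neq 0$ of your induction. There the inductive hypothesis legitimately applies to $v \in \operatorname{RM}(m-1,r-1)$, since $\mathrm{wt}(v) \leq \mathrm{wt}(u)+\mathrm{wt}(u+v) = \mathrm{wt}(f) < 2^{m-r+1} = 2\cdot 2^{(m-1)-(r-1)}$. But the two halves $u$ and $u+v$ lie in $\operatorname{RM}(m-1,r)$, whose minimum weight is $2^{m-r-1}$, and the only bound available is $\mathrm{wt}(u)+\mathrm{wt}(u+v) < 2^{m-r+1}$; each half can therefore have weight up to nearly $3\cdot 2^{m-r-1}$, well beyond twice the minimum weight of its ambient code, so the inductive hypothesis tells you nothing about the structure of $u$. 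Consequently the step ``analyzing the residual $u$ on the support of $v$ \dots determines whether multiplying by $x_m$ extends a product, enlarges a type-I sum, or promotes to type-II'' is not licensed by anything you have established: knowing the factored form of $v$ together with the (correct) estimate $|\mathrm{supp}(u)\setminus \mathrm{supp}(v)| = (\mathrm{wt}(f)-\mathrm{wt}(v))/2 < 2^{m-r-1}$ does not by itself force $u$ into the very special shapes needed, and this deduction --- not bookkeeping --- is where essentially all of Kasami and Tokura's work lies. The second gap you name yourself: the ``rigidity'' dichotomy (the cofactor is a quadratic form of rank $2\mu-2$, giving type I, or a sum of two disjoint $\mu$-fold products, giving type II) is exactly the theorem to be proved, restated; deferring it as ``the technical heart'' reducible to ``classifying bilinear and split forms over $\mathbb{F}_2$'' is a plan, not an argument, and the weight quantization $\mathrm{wt}(f) = 2^{m-r+1}-2^{m-r+1-\mu}$ hinges on that same missing analysis. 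In fairness, your opening moves (Plotkin split, affine normalization, the $v=0$ case, the support estimate) are sound and are broadly in the spirit of the original induction, but as submitted the theorem remains unproven.
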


\begin{example}
Let $m = 7$ and $r = 4$, the minimum weight $w_{\min}$ of $\operatorname{RM}(7, 4)$ is 8,

$f_1 = x_1x_2x_3x_4$, by Equation (\ref{eq1}), $\mu = 1, \mathrm{wt}(f_1) = 8$ ;

$f_2 = x_1x_2(x_3x_4+x_5x_6)$, by Equation (\ref{eq1}), $\mu = 2, \mathrm{wt}(f_1) = 12$;

$f_3 = x_1(x_2x_3x_4+x_5x_6x_7)$, by Equation (\ref{eq2}), $\mu = 3, \mathrm{wt}(f_1) = 14$ .
\end{example}

We say $P(g_1,\cdots, g_{r+2 \mu-2})$ and $P(g'_1,\cdots, g'_{r+2 \mu-2})$ (or $Q(g_1,\cdots, g_{r+\mu})$ and $Q(g'_1,\cdots, g'_{r+\mu})$) have different forms if \deleted{if} there exists some $i$ such that $g_i \neq g'_i$. It is important to note that different forms can lead to the same polynomial. For instance, consider the case when $m=4$ and $r=2$. Let $g_1 = x_1, g_2 = x_4, g_3 = x_2, g_4 = x_2,  g'_1 = x_1, g'_2 = x_4+x_2, g'_3 = x_2, g'_4 = x_3+x_2$. In this example, we have $P(g_1,g_2,g_3,g_4) = P(g'_1,g'_2,g'_3,g'_4)$, despite $g_2\neq g'_2$ and $g_4\neq g'_4$. This is one of the main challenges in the enumeration process.

\subsection{Weight distribution of polar codes}\label{ss24}

In this paper, we utilize Theorem \ref{thm1} to enumerate the polynomials with weight less than $2w_{\min}$ in decreasing polar codes. Note that an $r$-$th$ order decreasing monomial code $C(\MI)$ is a subcode of the $r$-$th$ RM code. 

\begin{example}\label{ex:0}
Let $C(\MI)$ be a [128,80] decreasing polar code where $\MI$ is generated by $\MI_{\min}=\{x_1x_2x_5x_7,x_1x_3x_4x_7,x_1x_4x_5x_6,x_2x_3x_5x_6,x_4x_6x_7\}$. By Theorem $\ref{thm1}$, the minimum weight of $C(\MI)$ is 8.

As an example, let us consider polynomials in $C(\MI)$ with weight less than 16. Since $C(\MI)$ is a subcode of RM$(7,4)$, according to Theorem $\ref{thm1}$, the polynomials \added{in $C(\MI)$ with weight less than 16 can be written as} \deleted{are} the type-I polynomials $P(g_1,\cdots, g_{r+2 \mu-2})$ where $r=4$ and $\mu =1,2$ or the type-II polynomials $Q(g_1,\cdots, g_{r+\mu})$ where $r=4$ and $\mu =3$. We select $g_1,g_2,\cdots$ in order to enumerate these polynomials. 

During the enumeration process, $g_1,g_2,\cdots$ must be linearly independent, as stated in Theorem $\ref{thm1}$. Besides, we must guarantee that the polynomial belongs to the subcode $C(\MI)$. 

Moreover, it is important to note that different forms may represent the same polynomial, and thus it is redundant to include all these forms in the enumerations. For example, let us consider $f = x_1x_2(x_3x_6+x_4x_5) =  P(x_1,x_2,x_3,x_6,x_4,x_5)$ and $f' = x_1x_2(x_3(x_6+x_4)+x_4(x_5+x_3)) = P(x_1,x_2,x_3,x_6+x_4,x_4,x_5+x_3)$. It is straightforward to confirm that $f=f'$, i.e., both $f$ and $f’$ represent the same polynomial. 
\end{example}

In conclusion, in order to enumerate the type-I and type-II polynomials in $C(\MI)$ accurately, we need to carefully consider the following points in the enumeration procedure:

1) The linear polynomials $g_1,g_2,\cdots$ should be linearly independent;

2) The polynomial $P(g_1,\cdots, g_{r+2 \mu-2})$ or $Q(g_1,\cdots, g_{r+\mu})$  must belong to $C(\MI)$;

3) Different forms may represent the same polynomial. Thus, it is important to eliminate the redundant forms during the enumeration.

The authors in \cite{Kasami1970} applied Theorem \ref{thm1} to calculate the number of polynomials with weight less than $2w_{\min}$ in RM codes. It should be noted that in their study, since all monomials with degree $r$ belong to RM$(m,r)$, the second point is automatically satisfied. The polynomials are selected carefully so that they do not become linear combinations of the previously selected polynomials. The third point only appears in the type-I polynomial, and the number is calculated based on the weight distribution of the second order RM codes \cite{Sloane1970}.

\added{Since $r$-$th$ order decreasing polar code $C(\MI)$ is a subcode of $r$-$th$ order RM code, the polynomials with weight less than $2w_{\min}$ can be expressed as the type-I polynomials or the type-II polynomials, so there is no concern about miscounting.} However, only a fraction of the type-I and type-II polynomials actually belong to $r$-$th$ order polar codes. This distinction gives rise to completely different methods to examine the three above points \added{to avoid overcounting} in the enumeration procedure for polar codes. \added{The first point is necessary, otherwise the polynomial is not a codeword with weight less than $2w_{\min}$. The second point ensures that we exclude polynomials not in $C(\MI)$, while the third point prevents counting the same polynomials multiple times.} To facilitate this examination process, the largest term of a linear polynomial can be utilized as a helpful tool, which will be explained in detail later.

\added{The case $\mu=1$, i.e., the number of polynomials with $w_{\min}$, was calculated in \cite{Bardet2016}, and the case $\mu=2$, with weight $1.5w_{\min}$, was addressed in \cite{Rowshan2023}. In these studies, they applied the lower-triangular affine automorphism group to partition the codewords into different orbits.}

\added{When $\mu=2$, the authors in \cite{Rowshan2023} provided a technique to avoid overcounting for $f + g = f' + g'$, where $f,g,f',g'$ are degree-2 polynomials. However, when $\mu\geq 3$, avoiding overcounting with the sum of $\mu$ polynomials is more complex, which cannot be solved by \cite{Rowshan2023}. To handle this problem, we present a novel method by dividing the low-weight polynomials into different sets based on their first terms rather than orbits. And we count the free coefficients in polynomials in Algorithm \ref{alg:1}. }

\added{Additionally, when $\mu\geq 3$, it is necessary to consider type-II polynomials, which is not involved in \cite{Rowshan2023}. To solve this problem, we also divide the low-weight polynomials into different sets based on their first terms. And we count the linearly independent polynomials in Algorithm \ref{alg:2}.}

\subsection{\added{Proof Outline}}

We describe our proof briefly as follows. In subsection \ref{ss31}, we demonstrate that every polynomial can be expressed in a restricted form that satisfies the conditions in Lemma \ref{lemma2}. Besides, when considering restricted forms, the linear factors of a polynomial are unique. Although repetitive counting has not been completely eliminated, this transformation reduces the number of possible forms, which simplifies subsequent enumeration significantly.

In subsection \ref{ss32}, we calculate the number of the type-I polynomials $P(g_1,\cdots, g_{r+2 \mu-2})$. Focusing on the restricted forms, we ensure that the largest terms of $g_1,\cdots, g_{r+2\mu-2}$ can be rewritten to be all distinct and unique, as proven in Lemma \ref{lemma3} and \ref{lemma4}. Therefore, we can classify the polynomials into subsets based on their largest terms. This prevents duplications and enables clear categorization.

To calculate the size of each subset, we choose the linear polynomials $g_1,\cdots, g_{r+2\mu-2}$ in order so that the three points outlined in Example \ref{ex:0} are satisfied. Checking the first two conditions is straightforward. The challenge lies in the fact that different forms may result in the same polynomial, as shown in Example \ref{ex:0}.

The number of choices for $g_1,\dots, g_{r-2}$ is \replaced{$2^{\lambda_{x_{F(g_1)}\cdots x_{F(g_{r-2})}}} = 2^{\sum_{t=1}^{r-2} (F(g_t)-t+1)}$}{$\lambda_{x_{F(g_1)}\cdots x_{F(g_{r-2})}} =  \sum_{t=1}^{r-2} (F(g_t)-t+1)$}. \deleted{Besides, we note that $\sum_{j=1}^{\mu} g_{r-3+2j}g_{r-2+2j}$ is a polynomial with degree 2, and some coefficients are restricted by the others. We calculate the number of free coefficients in Theorem \ref{thm:m1}.} \added{Next, the number of distinct polynomials $\sum_{j=1}^{\mu} g_{r-3+2j}g_{r-2+2j}$ is computed in Theorem \ref{thm:m1}, and the proof is divided into two parts. In the first part (Lemma \ref{lemma7}), we establish the feasibility of rewriting the polynomials while fixing certain coefficients of $g_{r-1},\cdots, g_{r+2\mu-2}$ to 0. Then the number of possible polynomials is upper bounded by $2^{\alpha_u + \beta_u + 2\gamma_u}$, where $\alpha_u + \beta_u + 2\gamma_u$ is the number of the remaining coefficients defined in subsection \ref{ss32}. In the second part, we illustrate that the remaining coefficients are actually free, i.e., any value of these coefficients yields a different polynomial. Therefore, the number of low-weight codewords is exactly $2^{\alpha_u + \beta_u + 2\gamma_u}$.}

In subsection \ref{ss33}, we calculate the number of the type-II polynomials $Q(g_1,\cdots, g_{r+\mu})$. Similar to the previous case, we focus on the restricted forms. As shown in Lemma \ref{lemma5}, \added{we can also classify the polynomials into subsets based on their largest terms. Similarly, we need to choose the linear polynomials to satisfy three points in Example \ref{ex:0}. By Lemma \ref{lemma5},} regardless of order, different restricted forms represent different polynomials. \added{Therefore, the third point is satisfied, and the challenge lies in the first and the second points. }\deleted{We can also classify the polynomials into subsets based on their largest terms.}

\deleted{To calculate the size of each subset, we choose the linear polynomials $g_1,\cdots, g_{r+\mu}$ that satisfy the first and second points in Example \ref{ex:0}.}  The number of choices for $g_1,\cdots, g_r$ is $\replaced{2^{\lambda_{x_{F(g_1)}\cdots x_{F(g_{r})}}}}{\lambda_{x_{F(g_1)}\cdots x_{F(g_{r})}}}$. We then choose $g_{r+1},\cdots,g_{r+\mu}$ in order so that the linearly independence condition is satisfied, i.e., $g_{r+j}$ is not the linear combination of $g_1,\cdots,g_{r+j-1}$ for all $1\leq j\leq \mu$. For the second point, the difficulty arises when $F(g_{r-\mu+j}) = F(g_{r+j})$ for all $1\leq j\leq \mu$. Since the largest term is cancelled, $x_{F(g_1)}\cdots x_{F(g_r)}\in \MI$ is not a necessary condition for $Q(g_1, \cdots, g_{r+\mu})\in C(\MI)$. In this case, the coefficients of monomials not belonging to $C(\MI)$ in $g_1\cdots g_r$ and $g_1\cdots g_{r-\mu}g_{r+1}\cdots g_{r+\mu}$ must be equal thus are cancelled and disappear in $Q(g_1, \cdots, g_{r+\mu})$. We calculate the number of choices for $g_{r+1},\cdots,g_{r+\mu}$ satisfying this condition in Theorem \ref{thm:m2}.

\section{Classification and enumeration of polynomials}

In this section, we focus on the classification and enumeration of the polynomials with weight less than $2w_{\min}$ in $r$-order decreasing monomial codes $C(\MI)$. According to Theorem \ref{thm1}, we know $w_{\min} = 2^{m-r}$ and there are two kind of polynomials with weight less than $2^{m-r+1}$: the type-I and the type-II polynomials. We enumerate the number of polynomials in each type respectively by dividing polynomials into disjoint sets and calculating the size of each set. 

\added{For $u = (i_1,\dots,i_l)$, define the number of indices $i_j$ with $j\in [s,t]$ and $i_j<k$ to be $\varphi_{u,[s,t]}(k) = |\{s\leq j\leq t  \mid i_j < k \}|$, and define $ \bar{\varphi}_{u,[s,t]}(k) = k - \varphi_{u,[s,t]}(k)$. This definition simplifies the enumeration of low-weight codewords.}

\subsection{Restricted form of polynomials}\label{ss31}

As illustrated in Example \ref{ex:0}, a polynomial can be expressed in various forms. However, we prove that every polynomial can be expressed in a way that satisfies the conditions stated in Lemma \ref{lemma2}. This allows us to reduce the number of possible forms, which is beneficial for the subsequent enumeration. 

\begin{lemma}\label{lemma1}
Let $f = gh$ where $g$ is a linear polynomial with $F(g) = t$. Construct $h'$ by replacing every $x_t$ in $h$ with $ g + x_t + 1$. Then $f = gh'$ and $x_t$ does not \replaced{appear}{appears} in $h'$.
\end{lemma}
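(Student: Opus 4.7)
The plan is to establish the lemma in two steps: first, verify that $x_t$ does not appear in $h'$; second, show $gh = gh'$ as polynomials in $\MR_\mM$.

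For the first step, the key observation is that $F(g) = t$ forces the coefficient of $x_t$ in $g$ to be $1$. So I decompose $g = x_t + g_0$, where $g_0$ involves only $x_1,\dots,x_{t-1}$ and possibly a constant. Then the substituted expression is
\[
g + x_t + 1 = g_0 + 1,
\]
which contains no $x_t$. Writing $h = x_t h_1 + h_0$ with $h_0,h_1$ free of $x_t$, the construction of $h'$ gives $h' = (g_0+1)h_1 + h_0$, which is also $x_t$-free.

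For the second step, I will exploit the fact that on multilinear polynomials the evaluation map $\text{ev}$ is a bijection, so it suffices to verify $g(b)h(b) = g(b)h'(b)$ for every $b \in \FF_2^m$. If $g(b) = 0$, both sides vanish. If $g(b) = 1$, then the quantity substituted for $x_t$ inside $h'$ evaluates to $g(b) + b_t + 1 = b_t$, which is exactly the original $t$-th coordinate of $b$, so $h'(b) = h(b)$ and equality follows after multiplying by $g(b)$.

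I do not anticipate any significant obstacle; the lemma is essentially a bookkeeping trick. The heart of the argument is that the substitution $x_t \mapsto g + x_t + 1$ is engineered to coincide with $x_t$ wherever $g = 1$, and to be rendered moot by the factor $g = 0$ elsewhere. A purely algebraic verification using $x_t^2 = x_t$ and $g_0^2 = g_0$ in $\FF_2$ would also go through, but passing through the evaluation map is cleaner and avoids expanding out $(x_t+g_0)((g_0+1)h_1+h_0)$.
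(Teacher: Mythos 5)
Your proposal is correct and follows essentially the same route as the paper: a pointwise evaluation argument splitting on $g(b)=0$ (both sides vanish) and $g(b)=1$ (where $g(b)+b_t+1=b_t$, so $h'(b)=h(b)$), with polynomial equality then following since $\mathrm{ev}$ is injective on $\MR_{\mM}$. Your explicit verification that $h'$ is $x_t$-free via $g=x_t+g_0$ is a detail the paper leaves implicit, but it does not change the argument.
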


\begin{proof}
Let $b=\added{(b_1,\dots,b_m)}\in\FF_2^m$. If $g(b) = 0$, then $0 = g(b)h(b) = g(b)h'(b)$. If $g(b)=1$, then $b_t = g(b) + b_t + 1$, so $h'(b) = h(b)$. Therefore, for any $b\in\FF_2^m$, $f(b) = g(b)h'(b)$.
\end{proof}

\begin{example}
Let $g = x_2 + x_1$ and $h = x_2 + x_2x_3$, then $h' = x_1+1 +(x_1+1)x_3$ by replacing every $x_2$ in $h$ by $ x_1 + 1$. Now $(x_2 + x_1)(x_2 + x_2x_3) = (x_2 + x_1)(x_1+1 +(x_1+1)x_3)$ and $x_2$ does not \replaced{appear}{appears} in $h'$.
\end{example}

\begin{lemma}\label{lemma1+}
Let $f = g$. $f',g'$ are the polynomials by replacing every $x_t$ in $f,g$ with another polynomial $h$. Then $f' = g'$.
\end{lemma}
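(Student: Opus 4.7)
The plan is to prove the lemma by evaluation. Since every polynomial in $\MR_{\mM}$ is determined by its values on $\FF_2^m$ (equivalently, has a unique multilinear representative), it suffices to verify $f'(b)=g'(b)$ for every $b\in\FF_2^m$.

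The key observation I would establish first is that substitution commutes with evaluation. Concretely, for any polynomial $p\in\MR_{\mM}$, if $p'$ denotes the result of replacing every $x_t$ in $p$ by $h$, then for each $b=(b_1,\dots,b_m)\in\FF_2^m$,
$$p'(b)=p(b_1,\dots,b_{t-1},h(b),b_{t+1},\dots,b_m).$$
Intuitively, this says that plugging $b$ into the substituted polynomial is the same as evaluating $p$ with its $t$-th coordinate set to the scalar $h(b)\in\FF_2$. Once stated, it is immediate from the definition of substitution in a polynomial ring.

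With that identity in hand, the conclusion follows in one line. Setting $c=(b_1,\dots,b_{t-1},h(b),b_{t+1},\dots,b_m)$, we have $f'(b)=f(c)$ and $g'(b)=g(c)$; since $f=g$ as polynomials in $\MR_{\mM}$, we get $f(c)=g(c)$, hence $f'(b)=g'(b)$. As $b$ was arbitrary, $f'=g'$.

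There is no real obstacle; the lemma is essentially the statement that the substitution $x_t\mapsto h$ is a ring homomorphism of $\MR_{\mM}$. The only mild subtlety worth flagging is that $h$ may itself involve $x_t$, so one should not think of the substitution purely syntactically. The evaluation viewpoint sidesteps this cleanly: regardless of the structure of $h$, the quantity $h(b)$ is a well-defined element of $\FF_2$, and it is this value that is plugged in for the $t$-th coordinate when computing $f'(b)$ and $g'(b)$.
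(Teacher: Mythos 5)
Your proof is correct and is essentially the same as the paper's: the identity $f'(b)=f(b_1,\dots,b_{t-1},h(b),b_{t+1},\dots,b_m)$ you establish is exactly the paper's definition of the auxiliary point $b'$ with $b'_t=h(b)$, the only cosmetic difference being that you argue directly while the paper phrases it as a contradiction.
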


\begin{proof}
Assume there exists some $b\in\FF_2^m$ such that $f'(b)\neq g'(b)$. Define $b'\in\FF_2^m$ satisfying $b'_j = b_j$ for $j\neq t$ and $b'_t = h(b)$. Then $f(b') = f'(b)\neq g'(b) = g(b')$, which is a contradiction.
\end{proof}

\begin{lemma}\label{lemma2}
Any \added{non-constant} polynomial $f$ \added{with linear factors} can be expressed as\deleted{($k$ coulb be 0)}:
$$
f = g_1 \cdots g_k g.
$$
satisfying

a) $g_1, \cdots, g_k$ are linearly independent linear polynomials and $g$  has no linear factors;

b) $F(g_1)<\cdots<F(g_k)$, and for all $1\leq i\leq k$, $x_{F(g_i)}$ only appears in $g_i$.

\added{If $f$ has no linear factors, we simply write $f=g$.}

We call $f = g_1\cdots g_k g$ is expressed in the restricted form. Furthermore, the restricted form is unique, that is, if $f$ can be expressed in two different ways:
$$
f =g_1 \cdots g_k \cdot g = h_1 \cdots h_j \cdot h
$$
and both of the two forms satisfy the above conditions a) and b), then $k=j$, $g_i =h_i$ for all $1\leq i\leq k$ and $g=h$.
\end{lemma}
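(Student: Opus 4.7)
The plan is to prove existence by iterating Lemma \ref{lemma1} and uniqueness by induction on $\deg f$. For existence, I induct on $\deg f$. The base case, $f$ a constant or $f$ having no linear factor, is handled by setting $k=0$ and $g=f$. For the inductive step, if $f$ has a linear factor, I select one whose $F$-value is minimal, call it $g_1$, and write $f=g_1\tilde h$; Lemma \ref{lemma1} replaces $\tilde h$ by a polynomial $h$ not involving $x_{F(g_1)}$ with $f=g_1 h$. The induction hypothesis applied to $h$ yields a restricted form $h=g_2'\cdots g_k' g$, and because $x_{F(g_1)}$ has been removed from $h$ while $F(g_1)$ was minimal among the $F$-values of linear factors of $f$, the $F$-values of the $g_j'$ all strictly exceed $F(g_1)$. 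Condition (b) then follows automatically, and linear independence is immediate from the distinct leading terms.

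For uniqueness, consider two restricted forms $f = g_1\cdots g_k g = h_1\cdots h_j h$ and induct on $\deg f$. If $f$ has no linear factor, then $k=j=0$ and $g=h=f$. Otherwise the goal is to show $g_1=h_1$, after which the smaller-degree polynomial $g_2\cdots g_k g = h_2\cdots h_j h$ inherits two restricted forms and the induction hypothesis finishes the argument. The key step is to identify $t:=F(g_1)$ intrinsically from $f$: I characterize it as the smallest index for which $f$ admits a decomposition $f=(x_t+r)X$ with $r$ linear, $F(r)<t$, and $X$ independent of $x_t$. That $t=F(g_1)$ satisfies this condition is immediate (take $r=g_1-x_t$ and $X=g_2\cdots g_k g$). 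The reverse direction---that no smaller $t'$ works---is the main obstacle: one expands $f$ in powers of $x_{t'}$ using the first restricted form and invokes the Leibniz rule. Since $t' < F(g_i)$ for every $i$, $x_{t'}$ appears only as a lower-order term inside one or more of $g_1,\ldots,g_k$ (or inside $g$), so the expansion of the coefficient of $x_{t'}$ in $f$ picks up cross-term contributions from every factor containing $x_{t'}$; a case analysis on the number of such factors shows that this coefficient cannot be absorbed into a factorization $(x_{t'}+r')X'$ with $F(r')<t'$, producing the required contradiction.

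Once $F(g_1)=F(h_1)=t$ is established, writing $g_1=x_t+r$, $h_1=x_t+r'$ and matching coefficients of $x_t$ yields $g_2\cdots g_k g = h_2\cdots h_j h$; the induction hypothesis makes these agree factor by factor, so $k=j$, $g_i=h_i$ for $i\ge 2$, and $g=h$. Setting $X = g_2\cdots g_k g$, the identity $g_1 X = h_1 X$ rearranges to $(r-r')X = 0$. To turn this into $r=r'$, I would restrict to the affine subspace defined by fixing $x_{F(g_i)} = 1+(g_i - x_{F(g_i)})$ for every $i\ge 2$; on this subspace each $g_i$ evaluates to $1$, so $X$ collapses to $g$ and the equation becomes $(r-r')g = 0$. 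If $r\ne r'$, then $1+(r-r')$ is a nonzero, nonconstant linear polynomial (the degenerate case $r-r'=1$ being ruled out since it would force $X=0$) whose zero set is contained in that of $g$, making it a linear factor of $g$ and contradicting the assumption that $g$ has none. Hence $r=r'$ and $g_1=h_1$, completing the induction.
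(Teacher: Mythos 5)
The crux of your uniqueness argument is the claim that no $t'<F(g_1)$ admits a decomposition $f=(x_{t'}+r')X'$ with $F(r')<t'$ and $X'$ free of $x_{t'}$; this is what forces $F(g_1)=F(h_1)$, and without it the peeling induction never starts. You yourself flag this as ``the main obstacle,'' but you only sketch it: expand in $x_{t'}$, ``invoke the Leibniz rule,'' and do ``a case analysis on the number of such factors.'' That analysis is never carried out, and it is not routine: over $\FF_2$ the discrete derivative obeys $\partial_{t'}(uv)=u|_{x_{t'}=0}\,\partial_{t'}v+v|_{x_{t'}=0}\,\partial_{t'}u+\partial_{t'}u\,\partial_{t'}v$, so with several factors containing $x_{t'}$ the cross terms compound and no contradiction falls out of coefficient matching. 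So the central step of your uniqueness proof is asserted, not proved. The irony is that the tool you already use at the very end closes this gap in two lines: restrict $f=(x_{t'}+r')X'$ to the affine subspace $A=\{g_1=\cdots=g_k=1\}$ (this is exactly the substitution of Lemma \ref{lemma1+}). By condition b), $f|_A=g$, and since every variable of $x_{t'}+r'$ is smaller than each fixed coordinate $F(g_i)$, the restriction reads $g=(x_{t'}+r')\cdot X'|_A$, exhibiting a nonconstant linear factor of $g$ (or the impossibility $1=(x_{t'}+r')\,X'|_A$ when $g=1$), contradicting a). This is also essentially how the paper argues, except that the paper restricts both forms at once to prove $g_1\cdots g_k=h_1\cdots h_j$ as functions and then peels factors from the largest index downward, whereas you peel from the smallest.

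There is a second, smaller gap in existence: ``Condition (b) then follows automatically'' presumes that the restricted form of $h$ returned by the induction hypothesis contains no occurrence of $x_{F(g_1)}$. All you know is that the product $h$ is free of $x_{F(g_1)}$, and that property does not in general pass to factors: for instance $x_2\,(x_1x_2+x_1+1)=x_2$, so a polynomial free of $x_1$ can admit factorizations whose factors involve $x_1$. The induction hypothesis as you state it only constrains occurrences of the leading variables $F(g_j')$ of $h$'s own factors, and says nothing about $x_{F(g_1)}$. To repair this you must either strengthen the induction hypothesis to ``the restricted form involves only variables occurring in the polynomial'' (true, but it needs its own argument), or do what the paper does and apply the Lemma \ref{lemma1} substitution $x_{F(g_1)}\mapsto g_1+x_{F(g_1)}+1$ to the returned factors, verifying that this preserves conditions a) and b). Apart from these two points, your architecture --- induction on degree, peeling one linear factor, matching the coefficient of $x_t$, and the final step turning $(r+r')X=0$ into $r=r'$ by restriction to $\{g_i=1\}$ --- is sound and close in spirit to the paper's proof.
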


The proof of the lemma is in Appendix A.

\begin{remark}
To simplify the enumeration process for the type-I and type-II polynomials, we apply Lemma \ref{lemma2} to reduce the number of different forms. For example, consider the polynomial $f = (x_2+x_1)x_1(x_3x_4+x_5x_6)$. Although it is not in the restricted form, we can rewrite it as $f = x_1(x_2+1)(x_3x_4+x_5x_6)$. This allows us to only consider the restricted forms in the enumeration procedure. However, it is worth noting that it is still challenging to address the third point in Example \ref{ex:0}. For example, we can also write $f$ as $f = x_1(x_2+1)(x_3(x_4+x_5)+x_5(x_6+x_3))$ which still satisfies the conditions in Lemma \ref{lemma2}. Thus, efficient methods need to be explored to eliminate duplicate counting according to the characteristics of the type-I and the type-II polynomials respectively.

\end{remark}

By Lemma \ref{lemma2}, we calculate the number of minimum-weight codewords in decreasing monomial codes, which has been shown in \cite{Bardet2016}. \added{In fact, from Theorem \ref{thm1}, the minimum-weight codewords are exactly the type-I polynomials with $\mu=1$.}

\begin{corollary} \label{cor1}

Let $C(\MI)$ be an $r$-th order decreasing monomial code. Define $\lambda_{x_{i_1}\cdots x_{i_r}} = \sum_{t=1}^r (i_t-t+1)$, then the number of $g_1\cdots g_r$ with $F(\replaced{g_t}{f_t})=i_t$ for $1\leq t\leq r$ is $2^{\lambda_{x_{i_1}\cdots x_{i_r}}}$. Furthermore, \added{the number of minimum-weight codewords is}
\begin{equation}\label{eq:cor1}
A_{2^{m-r}}(C(\MI)) = \sum_{f \in \MI_r} 2^{\lambda_f}.
\end{equation}
where $\MI_r$ is the set of all degree-$r$ monomials in $\MI$.

\end{corollary}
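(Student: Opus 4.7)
The plan is to combine Theorem \ref{thm1} (specialised to $\mu=1$) with the uniqueness statement of Lemma \ref{lemma2} to obtain a bijection between minimum-weight codewords in $C(\MI)$ and a family of explicit parameter choices, which can then be counted coefficient-by-coefficient. First I would invoke Theorem \ref{thm1}: since $w_{\min}=2^{m-r}=2^{m-r+1}-2^{m-r+1-\mu}$ forces $\mu=1$, every minimum-weight polynomial of $C(\MI)$ has the form $g_1\cdots g_r$ for linearly independent linear polynomials $g_1,\dots,g_r$. I would then apply Lemma \ref{lemma2} to rewrite each such product in its restricted form, so that $F(g_1)<\cdots<F(g_r)$ and, for every $t$, $x_{F(g_t)}$ appears only in $g_t$ (the residual factor $g$ in Lemma \ref{lemma2} is forced to be the constant $1$ here). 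Uniqueness of the restricted form guarantees that distinct tuples $(g_1,\dots,g_r)$ satisfying these conditions give rise to distinct polynomials, thereby eliminating the third source of overcounting discussed in Example \ref{ex:0}.

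Next I would fix a degree-$r$ monomial $f=x_{i_1}\cdots x_{i_r}$ with $i_1<\cdots<i_r$ and count the restricted forms satisfying $F(g_t)=i_t$ for all $t$. In such a form each $g_t$ is constrained to look like
\[
g_t \;=\; x_{i_t}+\sum_{j\in S_t} a_{t,j}\,x_j+a_{t,0},\qquad S_t=\{1,\dots,i_t-1\}\setminus\{i_1,\dots,i_{t-1}\},
\]
because $x_{i_t}$ is its largest term and the variables $x_{i_1},\dots,x_{i_{t-1}}$ are forbidden to appear. Since $|S_t|=i_t-t$, the number of free binary coefficients in $g_t$ is $i_t-t+1$, giving $2^{i_t-t+1}$ admissible choices. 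Multiplying over $t$ yields $\prod_{t=1}^r 2^{i_t-t+1}=2^{\lambda_f}$ restricted-form products with prescribed largest-term signature, which is the first assertion of the corollary.

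Finally I would show that such a product $g_1\cdots g_r$ lies in $C(\MI)$ iff $f\in\MI_r$, so summing $2^{\lambda_f}$ over $f\in\MI_r$ gives the total count $A_{2^{m-r}}(C(\MI))$. For the ``only if'' direction the leading monomial of $g_1\cdots g_r$ is exactly $f$, so $f$ must belong to $\MI$. For the ``if'' direction I would expand $g_1\cdots g_r$: every monomial that survives the expansion is obtained by choosing, from each $g_t$, either the constant or some $x_{j_t}$ with $j_t\le i_t$ and $j_t\notin\{i_1,\dots,i_{t-1}\}$, and the resulting monomial is therefore $\preccurlyeq x_{i_1}\cdots x_{i_r}=f$ in the monomial partial order; the decreasing property of $\MI$ then forces every such monomial into $\MI$, hence $g_1\cdots g_r\in C(\MI)$.

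The main obstacle is the membership step of the final paragraph: one has to verify carefully that the restricted form of the factors really does force the expansion to stay inside the down-set generated by $f$, using the exclusion $j_t\notin\{i_1,\dots,i_{t-1}\}$ to avoid producing indices that would wander outside the divisor poset of $f$. Once this is in hand, the remainder of the proof is a straightforward product count that leans entirely on the uniqueness clause of Lemma \ref{lemma2}.
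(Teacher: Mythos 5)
Your proposal is correct and follows essentially the same route as the paper: Theorem \ref{thm1} with $\mu=1$ plus the uniqueness of the restricted form from Lemma \ref{lemma2} to get a bijective parametrization, the coefficient count $i_t-t+1$ per factor giving $2^{\lambda_f}$, and the decreasing property for membership. The only difference is that you spell out the membership equivalence (expanding $g_1\cdots g_r$ and checking every surviving monomial is $\preccurlyeq x_{i_1}\cdots x_{i_r}$), which the paper simply asserts in one line from the decreasing property; your fleshed-out argument is valid.
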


\begin{proof}
According to Theorem \ref{thm1}, each polynomial $f$ with weight $2^{m-r}$ in $C(\MI)$ is a product of $r$ linearly independent polynomials. Then $f$ can be expressed as the restricted form $f = g_1\cdots g_r$. The number of undetermined coefficients in $g_t$ is $(F(g_t)-t+1)$ since the coefficient of $x_{F(g_j)}$ \added{in $g_t$} must be zero for $j<t$. Therefore, the number of polynomials $g_1,\cdots, g_r$ with the largest terms $x_{F(g_1)},\cdots, x_{F(g_r)}$ is $2^{\lambda_{x_{F(g_1)}\cdots x_{F(g_r)}}}$. 

To prove Equation (\ref{eq:cor1}), we verify the three points given in Example \ref{ex:0}. First, Lemma \ref{lemma2} ensures that $g_1,\cdots,g_k$ are linearly independent. Next, since $C(\MI)$ is decreasing, $f\in C(\MI)$ if and only if $x_{F(g_1)}\cdots x_{F(g_r)}\in \MI$. For the third point, Lemma \ref{lemma2} shows that $g_1 \cdots g_r = h_1 \cdots h_r$ if and only if $g_i =h_i$ for all $1\leq t\leq r$, so different forms lead to different polynomials in this situation.

Therefore, $A_{2^{m-r}}(C(\MI))$ is equal to the sum of $2^{\lambda_f}$ where $f \in \MI_r$.

\end{proof}

\begin{example}\label{ex:1}
\replaced{Let $C(\MI)$ be the polar code defined in Example \ref{ex:0}}{. Let $C(\MI)$ be a [128,80] decreasing polar code where $\MI$ is generated by $\MI_{\min}=\{x_1x_2x_5x_7,x_1x_3x_4x_7,x_1x_4x_5x_6,x_2x_3x_5x_6,x_4x_6x_7\}$.} with minimum weight 8.

For example, if $f = x_1x_2x_5x_7$ then
$$
\lambda_f = 1+(2-1)+(5-2)+(7-3) = 9, 
$$
so the number of  $g_1\cdots g_4$ with $F(g_1) = 1, F(g_2) = 2, F(g_3) = 5, F(g_4) = 7$ is $2^9 = 512$.

We calculate that $A_8(C(\MI)) = \sum_{f \in \MI_4} 2^{\lambda_f} = 5680$.
\end{example}

Next, we extend our analysis to other low-weight codewords. Since each low-weight codeword can be expressed as either a type-I or type-II polynomial, we calculate the number of both types respectively.

\subsection{Number of the type-I polynomials}\label{ss32}

In this subsection, our objective is to calculate the number of polynomials which can be expressed as equation (\ref{eq1}) in the $r$-$th$ order decreasing monomial code $C(\MI)$ when $\mu\geq 2$. The case $\mu = 1$\added{, i.e., the number of minimum-weight codewords,} has already been solved in Corollary \ref{cor1}. Thanks to Theorem \ref{thm1}, we can classify the type-I polynomials into disjoint set and the number of polynomials in each set is calculated in Theorem \ref{thm:m1}.

For a type-I polynomial $f$, we define the form $f = P(g_1, \cdots, g_{r+2 \mu-2})$ to be proper if both $f$ itself and $g_{r-3+2t}g_{r-2+2t}$ for each $1\leq t \leq\mu$ satisfy the two conditions a) and b) in Lemma \ref{lemma2}. According to Lemma \ref{lemma2}, each $f$ can \replaced{be}{been} expressed as a proper form, so we only need to consider proper forms from now on.

\begin{example}
$f = x_1((x_4+x_2)(x_2+1) + x_3(x_5+x_1))$ is not proper since the largest term of $x_2+1$ is $x_2$, which appears in $x_4+x_2$. We can rewrite it as the proper form $f = x_1(x_2(x_4+1) + x_3(x_5+1))$.
\end{example}

Define $A_{\mu}$ to be the set of proper polynomials $P(g_1, \cdots, g_{r+2 \mu-2}) \in C(\MI)$.  Denote $u = (i_1,\cdots,i_{r+2 \mu-2})$, and define $A_u$ to be the subset of $A_{\mu}$ such that $F(g_t) = i_t$. Define $S_{\mu} = \{(i_1,\cdots,i_{r+2 \mu-2}) \mid P(x_{i_1}, \cdots, x_{i_{r+2\mu-2}}) \in C(\MI), i_{r-1}<i_{r+1}<\cdots<i_{r+2 \mu-3}, \text { and } i_1,\cdots,i_{r+2 \mu-2} \text{ are all different} \}$. Due to the proper condition, we also have $i_1<i_2<\cdots< i_{r-2}$ and $i_{j}<i_{j+1}$ for $j=r-1,r+1,\cdots,r+2 \mu-3$. 

In fact, $S_{\mu}$ is defined to avoid the repetition of counting due to exchange of polynomials (e.x.  $g_1 \cdots g_{r-2}\left(g_{r-1} g_r+g_{r+1} g_{r+2}+\cdots+g_{r+2 \mu-3} g_{r+2 \mu-2}\right) = g_1 \cdots g_{r-2}\left(g_{r+1} g_{r+2}+g_{r-1} g_r+\cdots+g_{r+2 \mu-3} g_{r+2 \mu-2}\right)$). By defining $S_{\mu}$, we ensure each type-I polynomial belongs to a unique $A_u$ for some $u\in S_{\mu}$. This prevents duplication in the classification of the polynomials and allows for a clear and distinct categorization.

\begin{theorem}\label{thm2}
$$
A_{\mu} = \bigcup_{u\in S_{\mu}} A_u.
$$
And different $A_u$ are disjoint. Therefore,
$$
|A_{\mu}| = \sum_{u\in S_{\mu}} |A_u|.
$$
\end{theorem}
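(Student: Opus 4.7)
The proof plan is to establish two set-theoretic claims: the covering $A_{\mu} \subseteq \bigcup_{u\in S_\mu} A_u$ (the reverse inclusion being immediate from the definition of $A_u \subseteq A_\mu$), and the pairwise disjointness of the $A_u$ for distinct $u \in S_\mu$. The cardinality identity then follows at once by additivity.

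For the covering, I would take an arbitrary $f \in A_\mu$ together with a fixed proper representation $f = P(g_1,\ldots,g_{r+2\mu-2})$, and set $i_t = F(g_t)$. Conditions a) and b) of Lemma~\ref{lemma2} applied to $f$ force $i_1 < \cdots < i_{r-2}$, and the same conditions applied to each pair $g_{r-3+2t}g_{r-2+2t}$ force $i_{r-3+2t} < i_{r-2+2t}$. Because the inner sum $\sum_{t=1}^{\mu} g_{r-3+2t}g_{r-2+2t}$ is invariant under permutations of the $\mu$ pairs, I may reorder them by ascending first coordinate and obtain $i_{r-1} < i_{r+1} < \cdots < i_{r+2\mu-3}$. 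To verify the membership $P(x_{i_1},\ldots,x_{i_{r+2\mu-2}}) \in C(\MI)$, I would note that the top-degree monomials arising in the polynomial expansion of $f$ are precisely $x_{i_1}\cdots x_{i_{r-2}} x_{i_{r-3+2t}} x_{i_{r-2+2t}}$ for $t = 1,\ldots,\mu$ (they cannot cancel since the $\mu$ leading pairs are distinct by the inequality just secured), and invoke the fact that $\{\operatorname{ev}(e) : e \in \mM\}$ is an $\FF_2$-basis together with the decreasing property of $\MI$ to conclude that each of these monomials lies in $\MI$, so $P(x_{i_1},\ldots,x_{i_{r+2\mu-2}}) \in C(\MI)$. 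The all-distinct condition on $i_1,\ldots,i_{r+2\mu-2}$ is the one remaining ingredient; it relies on a rewriting step that transforms any proper form into an equivalent one whose largest terms are all distinct, and I expect this to be supplied by the forthcoming Lemmas~\ref{lemma3} and \ref{lemma4} previewed in subsection~\ref{ss32}.

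For the disjointness, I would suppose $f \in A_u \cap A_{u'}$ with $u, u' \in S_\mu$ and conclude $u = u'$. The corresponding two proper representations of $f$ yield, by the uniqueness clause of Lemma~\ref{lemma2} applied to $f$ itself, identical linear-factor prefixes, so $i_t = i_t'$ for $1 \le t \le r-2$. For the remaining $2\mu$ entries, the ordering constraints $i_{r-1} < i_{r+1} < \cdots$ together with $i_{r-3+2t} < i_{r-2+2t}$ inside each pair uniquely reconstruct the tuple from the multiset $\{F(g_{r-1}), \ldots, F(g_{r+2\mu-2})\}$, so it suffices to show that this multiset is an invariant of $f$ across all proper forms — again the content of Lemmas~\ref{lemma3} and \ref{lemma4}.

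The principal obstacle is the genuine non-uniqueness of decomposing the inner factor of $f$ as a sum of $\mu$ products of two linear polynomials, already illustrated in Example~\ref{ex:0}. Controlling this ambiguity is exactly what Lemmas~\ref{lemma3} and \ref{lemma4} will accomplish; once they are available, the present theorem reduces to bookkeeping that combines the uniqueness of the restricted form (Lemma~\ref{lemma2}), the symmetry of the pair sum, and the decreasing property of $C(\MI)$.
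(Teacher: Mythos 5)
Your overall strategy coincides with the paper's own proof: covering is obtained by rewriting a proper form so that all largest terms are distinct (Lemma~\ref{lemma3}) and then sorting the pairs, while disjointness follows from the uniqueness clause of Lemma~\ref{lemma2} (matching the prefix factors and the quadratic parts) together with Lemma~\ref{lemma4}. However, two of the concrete steps you supply are wrong as stated. In the covering part, the claim that the top-degree monomials of $f$ are \emph{precisely} $x_{i_1}\cdots x_{i_{r-2}}x_{i_{r-3+2t}}x_{i_{r-2+2t}}$ and that ``they cannot cancel since the $\mu$ leading pairs are distinct'' is false. Take the proper form with pairs $(x_2,x_3)$ and $(x_4+x_2,\,x_5+x_3)$: all four largest terms $2,3,4,5$ are distinct, yet $x_2x_3+(x_4+x_2)(x_5+x_3)=x_4x_5+x_3x_4+x_2x_5$, so the monomial $x_{i_1}\cdots x_{i_{r-2}}x_2x_3$ does not occur in the expansion of $f$ at all (while $x_{i_1}\cdots x_{i_{r-2}}x_3x_4$, which is not a leading-pair monomial, does). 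Hence membership $x_{i_1}\cdots x_{i_{r-2}}x_{i_{r-3+2t}}x_{i_{r-2+2t}}\in\MI$ cannot be read off the expansion of $f$ the way you propose. The conclusion is still true, but the argument needs repair: cancellation of a leading-pair monomial $M_t$ forces $M_t\prec M_{t'}$ for some other pair $t'$ (both variables of $M_t$ must occur in the two polynomials of pair $t'$), so pairs that are maximal under $\preccurlyeq$ survive with coefficient $1$, and the decreasing property of $\MI$ then propagates membership down to the dominated pairs. (The paper itself does not spell this out either---it simply asserts the equivalence $f\in C(\MI)\Leftrightarrow P(x_{i_1},\ldots,x_{i_{r+2\mu-2}})\in C(\MI)$---but your explicit justification, as written, is incorrect.)

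In the disjointness part, your reduction to ``multiset invariance'' also rests on a false claim: the ordering constraints do \emph{not} uniquely reconstruct the tuple from the multiset $\{F(g_{r-1}),\ldots,F(g_{r+2\mu-2})\}$. For the multiset $\{2,3,4,5\}$, the pairings $(2,3),(4,5)$ and $(2,4),(3,5)$ and $(2,5),(3,4)$ all satisfy ``smaller index first within each pair'' and ``first coordinates increasing,'' so invariance of the multiset alone would not rule out $u\neq v$; the pairing structure must be invariant as well. This is precisely why Lemma~\ref{lemma4} is formulated as positional equality $F(f_i)=F(g_i)$ for the two sorted forms; invoking it directly after Lemma~\ref{lemma2}, as the paper does, closes the argument, whereas your intermediate reduction would not.
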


In order to prove the theorem \ref{thm2}, we rely on the following lemmas to show that every polynomial $f=\sum_{i=1}^{\mu} f_{2 i-1} f_{2 i}$ can be rewritten so that the largest terms of linear polynomials $f_i$ are different regardless of the order.

\begin{lemma}\label{lemma3}
Let $f_1,...,f_{2\mu}$ be linearly independent polynomials and $f=\sum_{i=1}^{\mu} f_{2 i-1} f_{2 i}$. Then there exists linearly independent polynomials $g_1,...,g_{2\mu}$ with different largest terms such that $f=\sum_{i=1}^\mu g_{2 i-1} g_{2 i}$.
\end{lemma}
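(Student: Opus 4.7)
The plan is to start from $(f_1, \ldots, f_{2\mu})$ and iteratively eliminate coincidences among the values $F(f_i)$ while keeping the sum $\sum f_{2i-1} f_{2i}$ and the linear independence of the tuple intact. Let $t = \max_i F(f_i)$ and $S_t = \{i : F(f_i) = t\}$; as long as $|S_t| \geq 2$, I will reduce it by one via one of two moves, and then recurse on a tuple whose sorted-decreasing $F$-profile is strictly smaller in lexicographic order.

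The first move handles a within-pair coincidence $F(f_{2k-1}) = F(f_{2k}) = t$. I apply Lemma \ref{lemma1} to $f_{2k-1} f_{2k}$ with $g = f_{2k-1}$ (so $F(g) = t$) and $h = f_{2k}$, obtaining $f'_{2k}$ by replacing every $x_t$ in $f_{2k}$ by $f_{2k-1} + x_t + 1$. The lemma gives $f_{2k-1} f_{2k} = f_{2k-1} f'_{2k}$ and guarantees that $x_t$ does not appear in $f'_{2k}$, so $F(f'_{2k}) < t$. Since $f'_{2k} = f_{2k} + f_{2k-1} + 1$, the paper's definition of linear independence, which treats the additive constant on equal footing with the polynomials, ensures that the updated tuple remains linearly independent.

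The second move handles the case where no pair lies entirely in $S_t$. I pick two indices $a, b \in S_t$ from distinct pairs; using within-pair swaps (which preserve each product $f_{2k-1} f_{2k}$), I arrange $a = 2i-1$ and $b = 2j-1$. I then apply the shear $f_{2i} \mapsto f_{2i} + f_{2j}$ and $f_{2j-1} \mapsto f_{2j-1} + f_{2i-1}$, keeping $f_{2i-1}$ and $f_{2j}$ fixed. Direct expansion over $\FF_2$ shows
\[
f_{2i-1}(f_{2i} + f_{2j}) + (f_{2j-1} + f_{2i-1}) f_{2j} = f_{2i-1} f_{2i} + f_{2j-1} f_{2j},
\]
so $\sum f_{2k-1} f_{2k}$ is preserved. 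The cancellation of leading terms gives $F(f_{2j-1} + f_{2i-1}) < t$, and because partners of level-$t$ indices lie strictly below $t$ (no pair is fully in $S_t$), also $F(f_{2i} + f_{2j}) < t$. The shear matrix is invertible, so linear independence is preserved.

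In both moves $|S_t|$ decreases by one while no $F$-value is pushed above $t$, hence the sorted-decreasing $F$-profile strictly decreases lexicographically and the process terminates with all $F(g_i)$ distinct. The main obstacle is to avoid a cross-pair shear from re-introducing a level-$t$ conflict; this is why I always exhaust the within-pair case via Lemma \ref{lemma1} first, which guarantees that the partners $f_{2i}, f_{2j}$ used in the shear have $F$-values strictly below $t$.
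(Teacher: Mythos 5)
Your proof is correct and takes essentially the same approach as the paper: the within-pair move via Lemma \ref{lemma1} and the cross-pair shear identity $f_{2i-1}(f_{2i}+f_{2j})+(f_{2i-1}+f_{2j-1})f_{2j}=f_{2i-1}f_{2i}+f_{2j-1}f_{2j}$ are exactly the paper's two rewriting steps (the paper writes the latter as $(x_t+g)f_{2i}+(x_t+h)f_{2j}=(x_t+g)(f_{2i}+f_{2j})+(g+h)f_{2j}$). The only difference is the termination argument --- you use a lexicographic decrease of the sorted largest-term profile, while the paper uses the simpler potential $a=\sum_i F(f_i)$, which strictly decreases under every move and so needs no prescribed order of moves; both are valid.
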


The proof of the lemma is in Appendix B.

\begin{lemma}\label{lemma4}
Assume $\sum_{i=1}^{\mu} f_{2 i-1} f_{2 i} = \sum_{i=1}^\mu g_{2 i-1} g_{2 i}$, where all $F(f_i)$ are different, $F(f_1)<F(f_3)<\cdots<F(f_{2\mu-1})$, $F(f_{2 j-1}) <  F(f_{2 j})$ for $1\leq j\leq \mu$, and $g_i$ satisfy similar conditions. Then $F(f_i) = F(g_i)$ for all $1\leq i \leq 2\mu$.
\end{lemma}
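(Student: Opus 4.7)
The plan is to proceed by induction on $\mu$, in each step identifying and peeling off the pair whose even-indexed polynomial has the globally largest $F$-value. For the base case $\mu = 1$, where $f_1 f_2 = g_1 g_2$ with $F(f_1) < F(f_2)$ and similarly for $g$, I would set $M = F(f_2)$ and observe that $x_M$ does not appear in $f_1$ (since $F(f_1) < M$), so $M$ is the largest index occurring in any monomial of $f_1 f_2$. A direct expansion shows that, for $j < M$, the coefficient of $x_j x_M$ in $f_1 f_2$ equals the coefficient of $x_j$ in $f_1$; both quantities depend only on the product polynomial, so the symmetric argument on the $g$-side yields $F(g_2) = M$ and $F(g_1) = F(f_1)$.

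In the inductive step, I would let $M = \max_i F(f_i)$. Since the $F(f_i)$ are distinct and $F(f_{2j-1}) < F(f_{2j})$, there is a unique $k$ with $F(f_{2k}) = M$, and the same coefficient argument as in the base case shows $M$ is the largest index in any monomial of $f$. Hence $M = F(g_{2k'})$ for a unique $k'$ on the $g$-side as well. Furthermore, for $j < M$ only the pair $f_{2k-1} f_{2k}$ contributes to the coefficient of $x_j x_M$ in $f$ (the other pairs contain no $x_M$ since all their $F$-values are below $M$), and this coefficient equals the coefficient of $x_j$ in $f_{2k-1}$. This recovers $h := f_{2k-1}$ completely from $f$, and symmetrically $g_{2k'-1} = h$, so the ``top pair'' $(F(h), M)$ is common to both decompositions.

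The main obstacle is reducing the remaining problem to the inductive hypothesis. Writing $f_{2k} = x_M + f_{2k}'$ and $g_{2k'} = x_M + g_{2k'}'$ with $f_{2k}', g_{2k'}'$ not involving $x_M$, the polynomial $f + h x_M$ does not involve $x_M$ and admits the two representations $h f_{2k}' + \sum_{i \neq k} f_{2i-1} f_{2i}$ and $h g_{2k'}' + \sum_{i \neq k'} g_{2i-1} g_{2i}$, each a sum of $\mu$ products rather than $\mu - 1$, so the inductive hypothesis cannot be invoked directly. My plan is to apply Lemma \ref{lemma3} to $f + h x_M$ to extract a genuine $(\mu - 1)$-pair decomposition by absorbing the ``extra'' pair containing $h$ into the other pairs through a suitable change of factors, and then apply the inductive hypothesis to match the multisets of remaining pairs $\{(F(f_{2j-1}), F(f_{2j}))\}_{j \neq k}$ and $\{(F(g_{2j-1}), F(g_{2j}))\}_{j \neq k'}$. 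Once the remaining pairs are matched as multisets, the sorting constraints (odd-indexed $F$'s strictly increasing and each pair sorted) force $k = k'$ and uniquely reconstruct the full sequence, closing the induction.
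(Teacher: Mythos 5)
Your peeling of the top pair is sound---indeed, by working with the global maximum $M=\max_i F(f_i)$ you make airtight a step that the paper itself argues more loosely (it compares $F(f_{2\mu})$ with $F(g_{2\mu})$, which under the stated ordering need not be the global maximum). The genuine gap is the reduction that is supposed to feed the induction. The polynomial $f+hx_M$ is in general \emph{not} expressible as a sum of $\mu-1$ products, so no rewriting can ``extract a genuine $(\mu-1)$-pair decomposition'' from it. Concretely, take $\mu=2$, $f_1=x_1$, $f_2=x_{10}+x_4$, $f_3=x_2$, $f_4=x_3$ (all hypotheses of the lemma hold, with $M=10$, $k=1$, $h=x_1$); then $f+hx_M=x_1x_4+x_2x_3$, which is not a product of two affine forms: its weight on $\FF_2^4$ is $6$, while products of affine forms have weight $0$, $4$ or $8$ (equivalently, its associated alternating bilinear form has rank $4>2$). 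In general, writing $f_{2k}=x_M+f_{2k}'$, whenever $f_{2k}'$ is linearly independent of $h$ and the remaining $2\mu-2$ factors, the polynomial $f+hx_M=hf_{2k}'+\sum_{j\neq k}f_{2j-1}f_{2j}$ has alternating bilinear form of rank $2\mu$, whereas any sum of $\mu-1$ products has rank at most $2\mu-2$. Moreover, Lemma \ref{lemma3} could not perform the absorption even if one existed: it preserves the number of summands and only re-chooses the factors so that their largest terms are distinct. Finally, even granting some rewriting, the inductive hypothesis would then constrain the largest terms of the \emph{rewritten} factors, not the original $F(f_{2j-1}),F(f_{2j})$, so the multiset matching you want would not follow.

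The missing idea---and the paper's actual mechanism---is a substitution, not an addition. Set $t=F(h)$ and replace $x_t$ by $h+x_t$ in every factor on both sides; by Lemma \ref{lemma1+} the two sides remain equal. This substitution sends $h$ itself to $0$, so the pair containing $h$ vanishes simultaneously from both representations, leaving honest $(\mu-1)$-pair identities; and since $x_t$ is not the largest term of any other factor, each remaining factor changes only by adding a multiple of $h$ (whose largest term is $x_t$, strictly smaller), so every remaining $F$-value is preserved. After this reduction the inductive hypothesis applies verbatim, and your final insertion/sorting argument does force $k=k'$ and recover the full matching. With that replacement for your Lemma \ref{lemma3} step your outline closes; as written, it does not.
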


The proof of the lemma is in Appendix C.

\begin{proof}[Proof of Theorem \ref{thm2}]

Denote $f = P(f_1, \cdots,  f_{r+2 \mu-2})\in A_{\mu}$. Since $f$ is proper, for all $1\leq i\leq r-2$, $x_{F(f_i)}$ only appears in $f_i$. According to Lemma \ref{lemma3}, we suppose that $F(f_1),\cdots,F(f_{r+2\mu-2})$ are all different. Therefore, each polynomial in $A_{\mu}$ belongs to some $A_u$ after sorting the linear polynomials based on the largest terms.

Now let us assume there exists another proper form $g = P(g_1, \cdots, g_{r+2 \mu-2})\in A_v$ such that $f=g$. By Lemma \ref{lemma2}, we find that $f_i=g_i$ for $1\leq i\leq r-2$ and $f_{r-1} f_r+\cdots+f_{r+2 \mu-3} f_{r+2 \mu-2} = g_{r-1} g_r+\cdots+g_{r+2 \mu-3} g_{r+2 \mu-2}$. Since $u,v\in S_{\mu}$, the conditions in Lemma \ref{lemma4} are satisfied. Thus, $F(f_i)= F(g_i)$ for $r-1\leq i\leq r+2 \mu-2$. Therefore, we conclude that $u=v$, which implies that the different sets $A_u$ are disjoint.

\end{proof}

Now we proceed to calculate $|A_{\mu}|$. For $u = (i_1,\cdots,i_{r+2 \mu-2})\in S_{\mu}$. We aim to determine the number of different polynomials which can be expressed as $f = P(f_1,\cdots, f_{r-2}, h_1,g_1, \cdots, h_{\mu}, g_{\mu}) = f_1 \cdots f_{r-2}\left(h_1 g_1+\cdots+h_{\mu} g_{\mu}\right) \in A_u$ by assigning values to the coefficients in $f_1,\cdots, f_{r-2}, h_1,g_1, \cdots, h_{\mu}, g_{\mu}$. 

We verify the three points in Example \ref{ex:0}. Firstly, since $i_1,\cdots,i_{r+2 \mu-2}$ are all different, the linearly independence is naturally satisfied. Next, since $C(\MI)$ is decreasing, $f\in C(\MI)$ if and only if $P(x_{i_1},\cdots, x_{i_{r+2 \mu-2}})\in C(\MI)$, i.e.,  the $\mu$ polynomials $x_{i_1}\cdots x_{i_r}, x_{i_1}\cdots x_{i_{r-2}} x_{i_{r+1}}x_{i_{r+2}},\cdots,$ \\
$x_{i_1}\cdots x_{i_{r-2}} x_{i_{r+2\mu-3}}x_{i_{r+2\mu-2}} \in \MI$. However, the challenge arises from the fact that different forms may lead to the same polynomial. 

Since $f$ is proper, different $f_t$ with $1\leq t\leq r-2$ will result in different $f$ by Lemma \ref{lemma2}. Thus, there are $ 2^{\lambda_{x_{i_1}\cdots x_{i_{r-2}}}}$ choices for $f_1,\cdots, f_{r-2}$. 

For $u = (i_1,\cdots,i_{r+2 \mu-2})\in S_{\mu}$, define $\alpha_u = |\{(k,t)\mid 1\leq k<t\leq \mu, i_{r-2+2k}>i_{r-2+2t}> i_{r-3+2k}>i_{r-3+2t} \}|$, \replaced{$\beta_u = \sum_{t=1}^{\mu}(\bar{\varphi}_{u, [1,r+2\mu-2]}(i_{r-2+2t}) - \bar{\varphi}_{u, [1,r+2\mu-2]}(i_{r-3+2t}))$}{$\beta_u = |\{(s,k)\mid 1\leq s\leq m, 1\leq k\leq \mu, s\neq i_j, i_{r-2+2k}>s> $ $i_{r-3+2k} \}|$}, \added{i.e., the number of the integer pairs $(s,t)$ satisfying $i_{r-2+2t}>s> i_{r-3+2t}$ and $s$ does not appear in $u$,} \replaced{$\gamma_u = \sum_{t=1}^{\mu} \bar{\varphi}_{u,[1, r-2]}(i_{r-3+2t}),$}{$\gamma_u = \sum_{t=1}^{\mu} (i_{r-3+2t} - |\{1\leq k\leq r-2 \mid i_k< i_{r-3+2t}\}|)$} \added{i.e, the number of the integer pairs $(s,t)$ satisfying $i_{r-3+2t}>s$ and $s$ does not appear in $i_1,\cdots,i_{r-2}$}. In fact, $\alpha_u + \beta_u + 2\gamma_u$ is the number of free coefficients in $\sum_{j=1}^{\mu} h_jg_j$ which can be determined arbitrarily.

\begin{theorem} \label{thm:m1}
For $u = (i_1,\cdots,i_{r+2 \mu-2})\in S_{\mu}$,
\begin{equation} \label{eq4}
|A_u| = 2^{\lambda_{x_{i_1}\cdots x_{i_{r-2}}}+\alpha_u + \beta_u+ 2\gamma_u}.
\end{equation}
Therefore,
\begin{equation} \label{eq5}
|A_{\mu}| = \sum_{u\in S_{\mu}} |A_u| =  \sum_{u\in S_{\mu}}2^{\lambda_{x_{i_1}\cdots x_{i_{r-2}}}+\alpha_u + \beta_u+ 2\gamma_u}.
\end{equation}

\end{theorem}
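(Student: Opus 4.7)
The plan is to factor $|A_u|$ as the product of the number of choices for the linear prefix $f_1\cdots f_{r-2}$ and the number of distinct degree-two polynomials $D:=\sum_{j=1}^\mu h_j g_j$ with $F(h_j)=i_{r-3+2j}$ and $F(g_j)=i_{r-2+2j}$. By the proper-form conditions and Lemma \ref{lemma2}, the prefix factors uniquely, and an argument analogous to Corollary \ref{cor1} produces exactly $2^{\lambda_{x_{i_1}\cdots x_{i_{r-2}}}}$ choices for $(f_1,\ldots,f_{r-2})$. Moreover, because $x_{i_1},\ldots,x_{i_{r-2}}$ appear only in the prefix, the prefix choice is independent of $D$. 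The task therefore reduces to proving that the number of admissible $D$ equals $2^{\alpha_u+\beta_u+2\gamma_u}$.

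For the upper bound I would establish the normalization lemma promised in the outline (Lemma \ref{lemma7}): every admissible $D$ has a representative in which many coefficients of the $h_j,g_j$ are forced to zero. The mechanism is the substitution trick of Lemma \ref{lemma1}: whenever an index $i_s$ appears in a non-leading position of some $h_j$ or $g_j$, we may absorb it by replacing $x_{i_s}$ with a suitable polynomial in the other variables, preserving $D$ after cascading corrections into the remaining summands. After normalization, the surviving free positions in $h_j$ and $g_j$ are exactly those indexed by $s$ lying strictly below $i_{r-3+2j}$ and outside $\{i_1,\ldots,i_{r-2}\}$ (contributing $\gamma_u$ on each of the two sides, hence the factor $2\gamma_u$) or by $s$ strictly between $i_{r-3+2j}$ and $i_{r-2+2j}$ and not appearing in $u$ (contributing $\beta_u$). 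The $\alpha_u$ term records the unavoidable interaction between two pairs $(h_k,g_k)$ and $(h_t,g_t)$ whose index intervals interlock, producing one extra free bit per crossing. Summing gives $|A_u|\le 2^{\lambda_{x_{i_1}\cdots x_{i_{r-2}}}+\alpha_u+\beta_u+2\gamma_u}$.

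For the matching lower bound I would show that the map from free-coefficient vectors to polynomials $D$ is injective. The strategy is to extract each free coefficient as (a known linear combination of) the coefficient of a specific degree-two monomial in $D$; for monomials fed by a single summand $h_j g_j$ this is immediate, while for monomials fed by multiple summands the counting is organized by the crossing pattern defining $\alpha_u$, and the resulting coefficient system is shown to have full rank. The uniqueness portions of Lemmas \ref{lemma2} and \ref{lemma4} then rule out any further identifications: two normalized decompositions that yield the same polynomial must arise from identical coefficient vectors, so $|A_u|$ equals the announced power of two, and summing over $u\in S_\mu$ (Theorem \ref{thm2}) delivers Equation (\ref{eq5}).

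The main obstacle is the $\alpha_u$ term and the book-keeping it forces on the normalization step. Two pairs with interlocking index ranges share the monomial $x_{i_{r-3+2k}}x_{i_{r-2+2t}}$, which can be absorbed into either summand but not both simultaneously; one must argue both that enough substitutions can always be performed (upper bound) and that the leftover coefficients are genuinely independent (lower bound). Verifying that this arithmetic produces precisely $\alpha_u$ extra free bits — neither more nor less — is the delicate combinatorial heart of the theorem, and orchestrating the substitutions in a canonical order so that later normalizations do not undo earlier ones is what the proof of Lemma \ref{lemma7} must accomplish.
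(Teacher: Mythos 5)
Your proposal follows essentially the same route as the paper's proof: the identical factorization into the prefix count $2^{\lambda_{x_{i_1}\cdots x_{i_{r-2}}}}$ times the number of quadratic parts $\sum_j h_j g_j$, the same normalization lemma (the paper's Lemma \ref{lemma7}) forcing coefficients to zero for the upper bound, and the same exactness argument in which each surviving free coefficient pins down a distinct degree-two coefficient $a_{j,k}$ of the quadratic part (the paper makes your ``full rank'' claim concrete by assigning coefficients in a staged order so that the system is triangular). You also correctly single out the crossing term $\alpha_u$ and the need for a canonical order of substitutions so later normalizations do not undo earlier ones, which is exactly what the paper's proof of Lemma \ref{lemma7} accomplishes by processing the largest terms in descending order.
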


The proof of the theorem is in Appendix D.

The Algorithm \ref{alg:1} concludes the procedure for calculating $|A_{\mu}|$.

\begin{figure}[!t]
\begin{algorithm}[H]
\caption{Calculate $|A_{\mu}|$}
\begin{algorithmic}[1]\label{alg:1}

\renewcommand{\algorithmicrequire}{\textbf{Input:}}
\renewcommand{\algorithmicensure}{\textbf{Output:}}
\REQUIRE information set $\MI$ of the $r$-order decreasing polar code, weight $2^{m-r+1}-2^{m-r+1-\mu}$.
\ENSURE the number $M$ of the type-I polynomials with weight $2^{m-r+1}-2^{m-r+1-\mu}$.
\STATE $M \gets 0$
\FOR{ $u = (i_1,\cdots,i_{r+2 \mu-2}) \in S_{\mu}$} 
 \STATE Calculate the number $\alpha_u$ of the integer pairs $(k,t)$ satisfying $1\leq k<t\leq \mu, i_{r-2+2k}>i_{r-2+2t}> i_{r-3+2k}>i_{r-3+2t}$;
 \STATE Calculate the number $\beta_u$ of the \replaced{integer pairs $(s,t)$}{integer $s$} satisfying $i_{r-2+2t}>s> i_{r-3+2t}$ and $s$ does not appear in $u$;
 \STATE  Calculate the number $\gamma_u$  \replaced{of the integer pairs $(s,t)$ satisfying $i_{r-3+2t}>s$ and $s$ does not appear in $i_1,\cdots,i_{r-2}$}{$\gets \sum_{t=1}^{\mu} (i_{r-3+2t} - |\{1\leq k\leq r-2 \mid i_k< i_{r-3+2t}\}|)$};
 \STATE $M\gets M+2^{\lambda_{x_{i_1}\cdots x_{i_{r-2}}}+\alpha_u + \beta_u+ 2\gamma_u}$;
\ENDFOR

\end{algorithmic}
\end{algorithm}
\end{figure}

\begin{remark}
As $m$ approaches infinity, the complexity of calculating $\lambda_{x_{i_1}\cdots x_{i_{r+2\mu-2}}}, \alpha_u, \beta_u$ and $\gamma_u$ is \replaced{$O(r+2\mu-2)$, $O(\mu^2)$, $O(\mu m)$ and  $O(\mu r)$}{$O(m^2)$ since the length of $u$ is $r+2\mu-2\leq m$} \added{respectively}. Consequently, the complexity of calculating $|A_u|$ is $O( \replaced{\mu m}{m^2})$. Additionally, the size of $S_{\mu}$ is smaller \replaced{than}{that} $\binom{m}{r+2\mu-2}$. Therefore, the time complexity of Algorithm \ref{alg:1} is $O( \replaced{\mu m}{m^2}\binom{m}{r+2\mu-2})$.
\end{remark}

\begin{remark}
\added{When $\mu = 2$, Theorem \ref{thm:m1} provides the same result as Theorem 4 in \cite{Rowshan2023} with the number of sets $S_2$ being identical. Moreover, the complexity associated with calculating the size of each set is also at $O(m)$, thus the complexity is similar.}
\end{remark}

\begin{example}\label{ex:2}
\replaced{Let $C(\MI)$ be the polar code defined in Example \ref{ex:0}}{. Let $C(\MI)$ be a [128,80] decreasing polar code where $\MI$ is generated by $\MI_{\min}=\{x_1x_2x_5x_7,x_1x_3x_4x_7,x_1x_4x_5x_6,x_2x_3x_5x_6,x_4x_6x_7\}$.} We have calculated $A_8((C(\MI))$ in Example \ref{ex:1}. Now we calculate $A_{12}(C(\MI))$ with $\mu = 2$.

For example, if $u = (1,7,2,5,3,4)\in S_2$, $|A_u|$ is the number of different polynomials $f_1f_2(h_1g_1+ h_2g_2) $ with $F(f_1) = 1, F(f_2) = 7, F(h_1) = 2, F(g_1) = 5, F(h_2) = 3$ and $F(g_2) = 4$. Then $\lambda_{x_1x_7} = 7$. Since $F(g_1)>F(g_2)>F(h_2)>F(h_1)$, we have $\alpha_u = 0$. Besides, \added{$\beta_u$ is the sum of number of $s$ does not appear in $u$ satisfying $5>s>2$ and $4>s>3$. Since only 6 does not appear in $u$, $\beta_u = 0$.}\deleted{we calculate $\beta_u = 0$ and} \added{Next, $\varphi_{u,[1, 2]}(2) = \varphi_{u,[1, 2]}(3) = 1$, so} $\gamma_u = 2 - 1 + 3 - 1 = 3$. In fact, $x_7, x_6$ and $x_1$ do not appear in $g_1h_1+g_2h_2$. For $x_5$, the coefficients of $x_5$ can be determined arbitrarily by assigning $h_1$. For $x_4$, the coefficients of $x_4x_2$ and $x_4$ can be determined arbitrarily by assigning $h_2$. For $x_3$ and $x_2$, the coefficients of $x_3x_2$, $x_3$ and $x_2$ can be determined arbitrarily. Therefore, $|A_u| = 2^{7+6} = 8192$. 

From Theorem \ref{thm1} and \ref{thm:m1}, we have $A_{12}(C(\MI)) = \sum_{u\in S_2} |A_u| =  \sum_{u\in S_2}2^{\lambda_{x_{i_1}\cdots x_{i_{r-2}}}+\alpha_u + \beta_u+ 2\gamma_u} = 508672$.
\end{example}

\subsection{Number of the type-II polynomials}\label{ss33}

In this subsection,  our objective is to calculate the number of polynomials which can be expressed as equation (\ref{eq2}) for $\mu\geq 3$ in the $r$-$th$ order decreasing monomial code $C(\MI)$.  As in section \ref{ss32}, we classify the type-II polynomials into disjoint set and the number of polynomials in each set is calculated in Theorem \ref{thm:m2}.

For a type-II polynomial $f$, we define the form $f = Q(g_1, \cdots, g_{r+\mu})$ to be proper if $f$ itself, $g_{r-\mu+1}\cdots g_{r}$ and $g_{r+1}\cdots g_{r+ \mu}$ satisfies the conditions a) and b) in Lemma \ref{lemma2},  also satisfy those conditions. According to Lemma \ref{lemma2}, each $f$ can been expressed as a proper form, so we only need to consider proper forms from now on.

Let $a=(a_1,...,a_{\mu}), b=(b_1,...,b_{\mu})\in \mathbb{N}^{\mu}$, define the lexicographic order $a\leq_l b$ if and only if there exists some $i$ such that $a_i<b_i$ and $a_j = b_j$ for all $i<j\leq \mu$ or $a_j = b_j$ for all $1\leq j\leq \mu$.

Define $B_{\mu}$ to be the set of polynomials $Q(g_1,\cdots, g_{r+\mu})\in C(\MI)$. Denote $u = (i_1,\cdots,i_{r+\mu})$. Define $B_u$ to be the subset of $B_{\mu}$ such that $F(g_t) = i_t$. Define $T_{\mu} = \{(i_1,\cdots,i_{r+\mu}) \mid Q(x_{i_1},\cdots, x_{i_{r+\mu}}) \in C(\MI), (i_{r-\mu+1}, \cdots,i_r) \leq_l (i_{r+1},\cdots i_{r+\mu}) \}$. Due to the proper condition, we also have $i_1<i_2<\cdots< i_{r-\mu}, i_{r-\mu+1}<\cdots<i_r, i_{r+1}<\cdots<i_{r+\mu}$, and $i_1,\cdots,i_r$ are all distinct as well as $i_1,\cdots,i_{r-\mu+1}, i_{r+1},\cdots, i_{r+\mu}$.

In fact, the introduction of the lexicographic order in $T_{\mu}$ is to avoid repetition of counting due to exchange of polynomials (e.g., $g_1 \cdots g_{r-\mu}\left(g_{r-\mu+1} \cdots g_r+g_{r+1} \cdots g_{r+\mu}\right) = g_1 \cdots g_{r-\mu}\left(g_{r+1} \cdots g_{r+\mu}+g_{r-\mu+1} \cdots g_r\right)$). 

By defining $T_{\mu}$, we ensure each type-II polynomial belongs to a unique $B_u$ for some $u\in S_{\mu}$. This prevents duplication in the classification of the polynomials and allows for a clear and distinct categorization, similar to what we have seen in section \ref{ss32}.

\begin{theorem}\label{thm4}
$$
B_{\mu} = \bigcup_{u\in T_{\mu}} B_u.
$$
And different $B_u$ are disjoint. Therefore,
$$
|B_{\mu}| = \sum_{u\in T_{\mu}} |B_u|.
$$
\end{theorem}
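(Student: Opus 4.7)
The proof strategy parallels that of Theorem \ref{thm2}, with Lemma \ref{lemma5} now playing the role that Lemmas \ref{lemma3}--\ref{lemma4} played for the type-I case. Two facts must be established: every $f \in B_\mu$ lies in $B_u$ for some $u \in T_\mu$ (covering), and no $f$ lies in two different $B_u$ (disjointness).

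For covering, take any proper form $f = Q(g_1,\ldots,g_{r+\mu}) \in B_\mu$. Condition b) of Lemma \ref{lemma2} applied to the outer product $g_1 \cdots g_{r-\mu}$ gives that $F(g_1),\ldots,F(g_{r-\mu})$ are pairwise distinct, so by reordering the factors we may assume $F(g_1) < \cdots < F(g_{r-\mu})$. The same lemma applied within each of the two product blocks $g_{r-\mu+1}\cdots g_r$ and $g_{r+1}\cdots g_{r+\mu}$ guarantees that the internal largest terms of each block are distinct, so independently sorting each block produces $F(g_{r-\mu+1}) < \cdots < F(g_r)$ and $F(g_{r+1}) < \cdots < F(g_{r+\mu})$. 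Since $Q$ is symmetric in its two blocks, a further swap if necessary enforces the lexicographic condition $(F(g_{r-\mu+1}),\ldots,F(g_r)) \leq_l (F(g_{r+1}),\ldots,F(g_{r+\mu}))$. Setting $i_t := F(g_t)$ yields a tuple $u$ meeting all ordering constraints of $T_\mu$. It remains to verify that $Q(x_{i_1},\ldots,x_{i_{r+\mu}}) \in C(\MI)$, which will follow from the decreasing property of $\MI$ applied to the degree-$r$ monomials that actually survive cancellation in this skeleton, exactly as anticipated in the outline for subsection \ref{ss33}.

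For disjointness, assume $f \in B_u \cap B_v$ with $u, v \in T_\mu$, witnessed by two proper forms $f = Q(g_1,\ldots,g_{r+\mu}) = Q(h_1,\ldots,h_{r+\mu})$. Writing $f$ in the restricted form $g_1 \cdots g_{r-\mu} \cdot g = h_1 \cdots h_{r-\mu} \cdot h$ of Lemma \ref{lemma2} and invoking its uniqueness clause yields $g_t = h_t$ for $1 \leq t \leq r-\mu$, so $u$ and $v$ agree on the first $r-\mu$ coordinates, and moreover
\begin{equation*}
g_{r-\mu+1}\cdots g_r + g_{r+1}\cdots g_{r+\mu} = h_{r-\mu+1}\cdots h_r + h_{r+1}\cdots h_{r+\mu}.
\end{equation*}
Together with the intra-block ordering and lexicographic conditions imposed by $T_\mu$, this identity falls within the hypotheses of Lemma \ref{lemma5}, which forces $F(g_t) = F(h_t)$ for $r-\mu+1 \leq t \leq r+\mu$, and hence $u = v$. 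The two claimed identities for $B_\mu$ and $|B_\mu|$ follow immediately.

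The main obstacle is the skeleton-membership check in the covering step: when the largest terms of the two blocks coincide pairwise and cancel inside $Q$, one cannot read off $Q(x_{i_1},\ldots,x_{i_{r+\mu}}) \in C(\MI)$ monomial-by-monomial, and one must carefully identify which degree-$r$ monomials survive after cancellation and argue that each is dominated by some monomial of $\MI$. The disjointness direction, by contrast, is largely a bookkeeping exercise once Lemmas \ref{lemma2} and \ref{lemma5} are in hand.
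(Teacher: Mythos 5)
Your proof has the same skeleton as the paper's: covering by sorting the largest terms of a proper form, and disjointness via the uniqueness clause of Lemma \ref{lemma2} applied to the outer factors, then Lemma \ref{lemma5} together with the lexicographic condition to identify the two blocks, then Lemma \ref{lemma2} again inside each block. Your disjointness paragraph is essentially the paper's argument and is correct, except that the second invocation of Lemma \ref{lemma2} should be made explicit: Lemma \ref{lemma5} only identifies the block \emph{products}, not their factors, so equality of the largest terms $F(g_t)=F(h_t)$ still requires the uniqueness of the restricted form within each block.

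The gap is in the covering step, and your closing diagnosis of it is inverted. The case you single out as the main obstacle --- $i_{r-\mu+j}=i_{r+j}$ for every $j$, so that the two monomials cancel inside $Q(x_{i_1},\dots,x_{i_{r+\mu}})$ --- is trivial: the skeleton is then the zero polynomial, which lies in $C(\MI)$ automatically. The genuinely nontrivial case is the one your covering argument treats as routine, namely $M_1=x_{i_{r-\mu+1}}\cdots x_{i_r}\neq M_2=x_{i_{r+1}}\cdots x_{i_{r+\mu}}$, because there the cancellation can happen inside $f$ rather than inside the skeleton. Concretely, with $\mu=r=3$, $f=x_1x_2x_3+(x_4+x_1)(x_5+x_2)(x_6+x_3)$ is a proper form with linearly independent factors whose expansion does \emph{not} contain $x_1x_2x_3$; hence $f\in C(\MI)$ does not directly certify $x_1x_2x_3\in\MI$, and ``reading off membership monomial-by-monomial'' fails exactly in this case, not in the one you flagged. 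To close the gap one argues: if $M_1$ is cancelled, then each second-block factor contributes a distinct variable of $M_1$ bounded by that factor's largest term, so $M_1\preccurlyeq M_2$ by a matching argument; $M_1$ and $M_2$ cannot both cancel, since antisymmetry would then force $M_1=M_2$; therefore the surviving leading monomial appears in $f$, hence lies in $\MI$, and it dominates the other, so the decreasing property of $\MI$ yields $Q(x_{i_1},\dots,x_{i_{r+\mu}})\in C(\MI)$. In fairness, the paper's own proof of Theorem \ref{thm4} also dispatches covering in a single sentence and defers this membership question to the discussion preceding Theorem \ref{thm:m2}; but your proposal explicitly elevates the step to ``the main obstacle'' and then points at the wrong case, so the completion plan you outline would not finish the proof.
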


In order to prove the theorem \ref{thm4}, we rely on the following lemma presented in \cite{Kasami1970} (L6) to show that different $g_{r-\mu+1}, \cdots, g_{r+\mu}$ lead to different polynomials.

\begin{lemma}[\cite{Kasami1970}] \label{lemma5}

Suppose $\mu \geq 3$, $f=g+h=g^{\prime}+h^{\prime}$, $g=y_1 \cdots y_\mu$, $h=y_{\mu+1} \cdots y_{2 \mu}$, $g^{\prime}=y_1^{\prime} \cdots y_\mu^{\prime}$, $h^{\prime}=y_{\mu+1}^{\prime} \cdots y_{2 \mu}^{\prime}$, $y_1, \cdots, y_{2 \mu}$ are linearly independent, $y_1^{\prime}, \cdots, y_{2 \mu}^{\prime}$ are also linearly independent. Then
$$
g=g^{\prime}, \quad h=h^{\prime} \quad \text { or } \quad g=h^{\prime}, \quad h=g^{\prime}.
$$

\end{lemma}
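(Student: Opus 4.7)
The plan is to translate the polynomial identity into a set-theoretic statement about codimension-$\mu$ affine subspaces, reduce to the case $m=2\mu$, and then invoke a structural uniqueness argument whose key step breaks down precisely at $\mu=2$.

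First, each product such as $g=y_1\cdots y_\mu$ is the multilinear indicator polynomial of the codimension-$\mu$ affine subspace $V_1:=\{b\in\FF_2^m:y_i(b)=1\text{ for all }1\le i\le\mu\}$, because $g(b)=1$ iff every $y_i(b)=1$; define $V_2,V_1',V_2'$ analogously as supports of $h,g',h'$. The linear independence of the $y_i$'s forces $V_1\cap V_2$ to have codimension exactly $2\mu$, and similarly for the primed pair. The polynomial identity $g+h=g'+h'$ becomes the set equality $V_1\triangle V_2=V_1'\triangle V_2'$, and since an affine subspace is recovered from its indicator, proving the lemma reduces to $\{V_1,V_2\}=\{V_1',V_2'\}$. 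The common direction subspace $L$ shared by $V_1$ and $V_2$ (of dimension $m-2\mu$) equals the translation-stabilizer of $V_1\triangle V_2$, so it matches the analogous subspace for the primed pair; after quotienting $\FF_2^m$ by $L$ we may assume $m=2\mu$, in which case $V_1\cap V_2=\{p_0\}$ is a single point, and a further affine change of variables lets me take $V_1=\{z_1=\cdots=z_\mu=1\}$, $V_2=\{z_{\mu+1}=\cdots=z_{2\mu}=1\}$, and $p_0=(1,\dots,1)$.

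The geometric core consists of two claims. \emph{Claim A:} the only codimension-$\mu$ affine subspaces contained in $V_1\cup V_2$ are $V_1$ and $V_2$ themselves. Indeed, if $W$ is such a subspace containing $p_0$, then $W-p_0$ is a $\mu$-dimensional subspace of $(V_1-p_0)\cup(V_2-p_0)$; because these two $\mu$-dimensional subspaces meet only at $0$, any vector combining nonzero components from both escapes the union, forcing $W-p_0$ to equal one of them. If instead $p_0\notin W$, cardinality gives $|W\cap V_1|=|W\cap V_2|=2^{\mu-1}$ and the common direction of these two parallel halves must embed into $(V_1-p_0)\cap(V_2-p_0)=\{0\}$, which is impossible for $\mu\ge 2$. \emph{Claim B:} for $\mu\ge 3$, the apex $p_0'$ of the primed decomposition equals $p_0$. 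Combined with Claim A, this gives $V_1'\cup V_2'=(V_1\triangle V_2)\cup\{p_0\}=V_1\cup V_2$, whence $\{V_1',V_2'\}=\{V_1,V_2\}$.

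The main obstacle is Claim B, and it is here that $\mu\ge 3$ is essential. My plan is to recover the apex intrinsically from $f$ by analyzing its second-order partial derivatives. In the canonical coordinates one computes $\partial^2 f/\partial z_i\partial z_j=z_1\cdots\hat z_i\hat z_j\cdots z_\mu$ for $1\le i<j\le\mu$, the analogous degree-$(\mu-2)$ monomial in the second block for $\mu<i<j\le 2\mu$, and $0$ whenever $i$ and $j$ lie in different blocks. For $\mu\ge 3$ these nonzero entries are linearly independent degree-$(\mu-2)$ monomials, so the block-diagonal pattern of the Hessian matrix of $f$ is an intrinsic polynomial invariant that any alternative decomposition $f=g'+h'$ must respect; this forces the block partition induced by $\{V_1',V_2'\}$ to coincide with that induced by $\{V_1,V_2\}$, so in particular $p_0'=p_0$. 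For $\mu=2$ the Hessian entries collapse to constants and $z_1z_2+z_3z_4$ is a nondegenerate hyperbolic quadratic form whose orthogonal symmetry group produces genuine alternative decompositions, which is exactly why the lemma requires $\mu\ge 3$. A concrete realization of this derivative-based argument is carried out in \cite{Kasami1970}.
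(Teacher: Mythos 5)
The paper itself offers no proof of this lemma: it is quoted from \cite{Kasami1970} (Lemma~6 there), so your proposal has to be judged as a self-contained argument. Its skeleton is largely sound. The translation to supports of affine subspaces, the identity $V_1\triangle V_2=V_1'\triangle V_2'$, and Claim~A (with both the $p_0\in W$ and $p_0\notin W$ cases) are correct. The reduction to $m=2\mu$ rests on the assertion that the translation-stabilizer of $V_1\triangle V_2$ equals the common direction space; you assert this without proof, but it is true for $\mu\ge 2$ (compare the degree-$(\mu-1)$ homogeneous parts of $f(z+w)+f(z)$ in coordinates adapted to the unprimed decomposition; note it is \emph{false} for $\mu=1$), so this is only a minor omission.

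The genuine gap is Claim~B, which is the crux and is exactly where $\mu\ge 3$ must do its work. Two problems. First, "the block-diagonal pattern of the Hessian matrix of $f$ is an intrinsic polynomial invariant" is not a statement one can invoke: entries $\partial^2 f/\partial z_i\partial z_j$ are coordinate-dependent, and the primed decomposition lives in a different affine coordinate system. To make this intrinsic you would need the discrete second derivative $D_vD_wf(x)=f(x+v+w)+f(x+v)+f(x+w)+f(x)$ and the characterization that $D_vD_wf\equiv 0$ iff the block-$1$ components of $v,w$ are linearly dependent and the block-$2$ components are linearly dependent; the "only if" direction is precisely where $\mu\ge 3$ enters (the two blocks contribute polynomials of degree $\mu-2\ge 1$ in disjoint variable sets, which therefore cannot cancel, unlike the constant contributions when $\mu=2$). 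From this one can recover the unordered pair of direction spaces $\{U_1,U_2\}$ from $f$; none of this is carried out in your text. Second, and more seriously, even granting that the direction pair is intrinsic, your conclusion "so in particular $p_0'=p_0$" is a non sequitur: directions are linear data while the apex is affine data, and nothing you wrote excludes two decompositions with equal directions but translated subspaces. This hole is fixable — if $V_1'=V_1+w$ with $w\notin U_1$, then the $2^\mu-1$ points of $V_1\setminus\{p_0\}$ lie in $\mathrm{supp}(f)\subseteq V_1'\cup V_2'$, miss $V_1'$ entirely, yet $|V_1\cap V_2'|\le 1$ because the directions meet trivially, a contradiction for $\mu\ge 2$ — but the fix is absent. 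Finally, your closing sentence defers "a concrete realization" of the derivative argument to \cite{Kasami1970}, i.e., the decisive step is outsourced to the very reference being proved; as written, the proposal is a promising strategy with a correct geometric frame but an incomplete core.
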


\begin{proof}[Proof of Theorem \ref{thm4}]

Let $f = Q(f_1, \cdots,  f_{r+\mu})\in B_{\mu}$. Since $f$ is proper, every polynomial in $B_{\mu}$ belongs to some $B_u$ after sorting the linear polynomials based on their largest terms.

If there exists another proper form $g = Q(g_1, \cdots, g_{r+\mu})\in B_v$ such that $f=g$. According to Lemma \ref{lemma2}, we have $f_i=g_i$ for $1\leq i\leq r-\mu$ and $f_{r-\mu\added{+1}} \cdots f_r + f_{r+1} \cdots f_{r+\mu} = g_{r-\mu\added{+1}} \cdots g_r + g_{r+1} \cdots g_{r+\mu}$. Then by Lemma \ref{lemma5} and the lexicographic order, we deduce that $f_{r-\mu\added{+1}} \cdots f_r = g_{r-\mu\added{+1}} \cdots g_r,  f_{r+1} \cdots f_{r+\mu} = g_{r+1} \cdots g_{r+\mu}$. Once again, applying Lemma \ref{lemma2}, $f_i=g_i$ for $r-\mu + 1\leq i\leq r+\mu$. Therefore, we conclude that $u=v$, which implies different sets are disjoint.

\end{proof}

Now we proceed to calculate $|B_{\mu}|$. Denote $u = (i_1,\cdots,i_{r+\mu})\in T_{\mu}$. From the proof of Theorem \ref{thm4}, $Q(f_1 \cdots  f_{r+\mu}) = Q(g_1 \cdots g_{r+\mu})$ if and only if $f_i=g_i$ for all $1\leq i\leq r+\mu$, so the third point in Example \ref{ex:0} is satisfied trivially. However, we still need to check the first two points.

There are $\lambda_{x_{i_1}\cdots x_{i_r}}$ choices for $f_1,\cdots, f_r$. Next, we need to choose linearly independent $f_{r+1},\cdots, f_{r+\mu}$ with given largest terms such that $Q(f_1 \cdots  f_{r+\mu})\in C(\MI)$. 

Now, if $i_{r-\mu+j} \neq i_{r+j}$ for some $1\leq j\leq \mu$, $Q(x_{i_1} \cdots  x_{i_{r+\mu}})\in C(\MI)$ if and only if $Q(f_1 \cdots  f_{r+\mu}) \in C(\MI)$. However, if $i_{r-\mu+j} = i_{r+j}$ for all $1\leq j\leq \mu$, $Q(x_{i_1} \cdots  x_{i_{r+\mu}}) = 0\in C(\MI)$ does not imply $Q(f_1 \cdots  f_{r+\mu})\in C(\MI)$, so we need to examine the second point carefully. Define $b_u(k) = \max\{t \mid x_{i_1}\cdots x_{i_{r-\mu}}x_{i_{r+1}}\cdots x_{i_{r+k-1}} x_t x_{i_{r+k+1}}\cdots x_{i_{r+\mu}} \in \MI, t\neq i_{r+1},\cdots,i_{r+\mu} \}$. If $ x_{i_1}\cdots x_{i_{r-\mu}}x_{i_{r+1}}\cdots x_{i_{r+k-1}} x_1 x_{i_{r+k+1}}\cdots x_{i_{r+\mu}}\notin \MI$, $b_u(k)$ is defined to be 0. In this situation, $Q(f_1 \cdots  f_{r+\mu})\in C(\MI)$ if and only if for all $1 \leq k \leq \mu$, the coefficients of $x_j$ with $j>b_u(k)$ in $f_{r-\mu+k}$ and $f_{r+k}$ are equal. If $b_u(k) = 0$, then $f_{r-\mu+k} = f_{r+k} + a_0$ which contradicts against the linear independence condition, so $B_u = \varnothing$. 

\deleted{Define the number of indices in $i_1,\cdots,i_{r-\mu}$ less than $k$ to be $\varphi_u(k) = |\{1\leq j\leq r-\mu \mid i_j < k \}|$, and define the number of indices in $i_{r-\mu+1},\cdots,i_r$ less than $k$ to be $p_u(k) = |\{r-\mu+1\leq j\leq r \mid  i_j < k\}|$. }

Define $s_u(k)$ to be the number of choices for $f_{r+k}$.

Case 1) $i_{r-\mu+j} \neq i_{r+j}$ for some $1\leq j\leq \mu$, 

Subcase I) $i_{r+k} \neq i_{r-j}$ for all $0\leq j\leq \mu-1$, $s_u(k) = 2^{\added{\bar{\varphi}_{u,[1,r-\mu]}(i_{r+k})} - k + 1}$.

Subcase II) $i_{r+k} = i_{r-j}$ for some $0\leq j\leq \mu-1$, $s_u(k) = 2^{\added{\bar{\varphi}_{u,[1,r-\mu]}(i_{r+k})} - k + 1}-2^{\added{\varphi_{u,[r-\mu+1,r]}(i_{r+k})}+1}$ if $\added{\bar{\varphi}_{u,[1,r-\mu]}(i_{r+k})} - k > \added{\varphi_{u,[r-\mu+1,r]}(i_{r+k})}$ and 0 otherwise.

Case 2) $i_{r-\mu+j} = i_{r+j}$ for all $1\leq j\leq \mu$. 

$s_u(k) = 2^{\added{\bar{\varphi}_{u,[1,r]}(b_u(k))+1}}-2^k$ if $\added{\bar{\varphi}_{u,[1,r]}(b_u(k))} + 1 > k$ and 0 otherwise.

Finally, define $\sigma_u = \prod_{k=1}^{\mu} s_u(k)$ for Case 1 and $\sigma_u = (\prod_{k=1}^{\mu} s_u(k))/2$ for Case 2. We need to divide the value by 2 in Case 2 since $Q(f_1, \cdots,  f_{r+\mu}) = Q(f_1, \cdots, f_{r-\mu}, f_{r+1}, \cdots, f_{r+\mu}, f_{r-\mu+1},\cdots, f_{r})$ is calculated twice.

Next, we prove the correctness of the definition of $s_u(k)$ and derive $|B_\mu|$ in Theorem 5.

\begin{theorem}\label{thm:m2}

For $u = (i_1,\cdots,i_{r+\mu})\in T_{\mu}$,
\begin{equation} \label{eq6}
|B_u| = 2^{\lambda_{x_{i_1}\cdots x_{i_r}}}\sigma_u.
\end{equation}
Therefore,
\begin{equation} \label{eq7}
|B_{\mu}| = \sum_{u\in T_{\mu}} |B_u| =  \sum_{u\in T_{\mu}} 2^{\lambda_{x_{i_1}\cdots x_{i_r}}}\sigma_u.
\end{equation}

\end{theorem}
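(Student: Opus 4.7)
The plan is to apply Theorem \ref{thm4} to reduce the problem to computing $|B_u|$ for each $u=(i_1,\ldots,i_{r+\mu})\in T_\mu$, by counting proper forms $Q(f_1,\ldots,f_{r+\mu})$ with $F(f_t)=i_t$ that satisfy the three conditions of Example \ref{ex:0}. The proof of Theorem \ref{thm4} already establishes that distinct proper forms with prescribed largest terms give distinct polynomials in $B_u$, so the third condition is automatic and I only need to enforce the linear independence of $f_1,\ldots,f_{r+\mu}$ and the membership of the constructed polynomial in $C(\MI)$; in Case 2 a residual swap symmetry of the two groups will require dividing by $2$ at the end, which is the source of the Case 2 factor in $\sigma_u$.

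First I would count the outer product $f_1\cdots f_r$. Because this subproduct is proper and its leading monomial $x_{i_1}\cdots x_{i_r}$ lies in $\MI$ (since $u\in T_\mu$ forces $Q(x_{i_1},\ldots,x_{i_{r+\mu}})\in C(\MI)$), the same bookkeeping as in Corollary \ref{cor1} gives exactly $2^{\lambda_{x_{i_1}\cdots x_{i_r}}}$ valid choices. Next I would select $f_{r+k}$ for $k=1,\ldots,\mu$ in order, counting the valid options $s_u(k)$ given the earlier choices. The coefficient of $x_{i_{r+k}}$ is fixed to $1$, and properness forces the coefficients of $x_{i_j}$ at previously used largest terms smaller than $i_{r+k}$ (coming from the outer factors $f_1,\ldots,f_{r-\mu}$ and from $f_{r+1},\ldots,f_{r+k-1}$) to vanish, giving an a priori space of size $2^{\bar{\varphi}_{u,[1,r-\mu]}(i_{r+k})-k+1}$.

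In Case 1, where $i_{r-\mu+j}\neq i_{r+j}$ for some $j$, the monomial $x_{i_1}\cdots x_{i_{r-\mu}}x_{i_{r+1}}\cdots x_{i_{r+\mu}}$ lies in $\MI$, so $C(\MI)$-membership of the final polynomial is automatic and only linear independence must be enforced. In Subcase I, where $i_{r+k}$ is distinct from every $i_{r-j}$, the distinctness of largest terms already implies linear independence, giving the full $2^{\bar{\varphi}_{u,[1,r-\mu]}(i_{r+k})-k+1}$ candidates. In Subcase II, where $i_{r+k}=i_{r-j}$ for some $j$, the linearly dependent candidates are precisely the proper linear polynomials expressible as $f_{r-j}$ plus an $\FF_2$-linear combination of those $f_{r-\mu+j'}$ with $i_{r-\mu+j'}<i_{r+k}$ and an optional constant; counting these contributes $2^{\varphi_{u,[r-\mu+1,r]}(i_{r+k})+1}$ bad options, and subtracting yields the stated formula (with $s_u(k)=0$ when the subtraction would be non-positive, since then every a priori candidate is linearly dependent).

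In Case 2, where $i_{r-\mu+j}=i_{r+j}$ for all $j$, the leading monomials of $f_{r-\mu+1}\cdots f_r$ and $f_{r+1}\cdots f_{r+\mu}$ cancel in the sum, so $x_{i_1}\cdots x_{i_r}\in\MI$ is no longer required. Membership of the resulting polynomial in $C(\MI)$ instead demands that every surviving monomial in $f_1\cdots f_{r-\mu}(f_{r-\mu+1}\cdots f_r+f_{r+1}\cdots f_{r+\mu})$ lie in $\MI$; by the definition of $b_u(k)$, this forces the coefficient of $x_j$ with $j>b_u(k)$ in $f_{r+k}$ to match the corresponding coefficient in $f_{r-\mu+k}$, leaving $\bar{\varphi}_{u,[1,r]}(b_u(k))+1$ free coefficients. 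Removing the $2^k$ linearly dependent options gives $s_u(k)$, and dividing $\prod_k s_u(k)$ by $2$ to compensate for the swap $(f_{r-\mu+1},\ldots,f_r)\leftrightarrow(f_{r+1},\ldots,f_{r+\mu})$ produces $\sigma_u$; when $b_u(k)=0$, $f_{r+k}$ is forced to equal $f_{r-\mu+k}$ up to a constant, violating linear independence, so $B_u=\varnothing$. The main obstacle will be the Case 2 analysis: identifying precisely which coefficients of $f_{r+k}$ must be slaved to those of $f_{r-\mu+k}$ in order to keep every surviving monomial inside $\MI$, verifying that the threshold $b_u(k)$ is both necessary and sufficient, and checking that the remaining free coefficients can be assigned independently across different $k$ without creating unintended monomial cancellations that would accidentally put additional polynomials into $C(\MI)$; the swap correction and the $b_u(k)=0$ edge case are additional bookkeeping rather than new sources of difficulty.
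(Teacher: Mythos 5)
Your proposal is correct and follows essentially the same route as the paper's Appendix E proof: reduction to counting $|B_u|$ via Theorem \ref{thm4} (so the third point of Example \ref{ex:0} is automatic), the $2^{\lambda_{x_{i_1}\cdots x_{i_r}}}$ count for $f_1,\dots,f_r$ as in Corollary \ref{cor1}, sequential choice of $f_{r+k}$ with the same split into Case 1 (Subcases I and II, subtracting $2^{\varphi_{u,[r-\mu+1,r]}(i_{r+k})+1}$ dependent candidates) and Case 2 (coefficients above $b_u(k)$ slaved to $f_{r-\mu+k}$, subtracting $2^k$ dependent candidates), the division by $2$ for the swap symmetry, and the same conditions under which $B_u=\varnothing$. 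One small slip worth noting: your justification that $x_{i_1}\cdots x_{i_r}\in\MI$ is forced by $u\in T_\mu$ fails precisely in Case 2, where $Q(x_{i_1},\dots,x_{i_{r+\mu}})=0$ and that monomial need not lie in $\MI$; this does not damage the argument, since the count $2^{\lambda_{x_{i_1}\cdots x_{i_r}}}$ enumerates restricted forms independently of membership, and your Case 2 paragraph correctly drops the assumption.
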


The proof of the theorem is in Appendix E.

The Algorithm \ref{alg:2} concludes the procedure for calculating $|B_{\mu}|$.

\begin{figure}[!t]
\begin{algorithm}[H]
\caption{Calculate $|B_{\mu}|$}
\begin{algorithmic}[1]\label{alg:2}

\renewcommand{\algorithmicrequire}{\textbf{Input:}}
\renewcommand{\algorithmicensure}{\textbf{Output:}}
\REQUIRE the information set $\MI$ of $r$-order decreasing polar codes, weight $2^{m-r}-2^{m-r-\mu}$.
\ENSURE  the number $M$ of type-II polynomials with weight $2^{m-r+1}-2^{m-r+1-\mu}$.
\STATE $M \gets 0$;
\FOR{$u = (i_1,\cdots,i_{r+\mu})\in T_{\mu}$} 
 \IF{there exists some $1\leq j\leq \mu$ such that $i_{r-\mu+j} \neq i_{r+j}$}
 \STATE $L\gets 1$;
  \FOR{$k=1$ to $\mu$}
   \IF{there exists some $0\leq j\leq \mu-1$ such that $i_{r+k} = i_{r-j}$}
    \STATE $L\gets 2^{\added{\bar{\varphi}_{u,[1,r-\mu]}(i_{r+k})}- k + 1} L$;
   \ELSE
    \IF{$\added{\bar{\varphi}_{u,[1,r-\mu]}(i_{r+k})}- k\leq \added{\varphi_{u,[r-\mu+1,r]}(i_{r+k})}$}
     \STATE $L\gets 0$;
     \STATE {\bf Return}
    \ELSE
     \STATE $L\gets 2^{(\added{\bar{\varphi}_{u,[1,r-\mu]}(i_{r+k})} - k + 1}-2^{\added{\varphi_{u,[r-\mu+1,r]}(i_{r+k})}+1})L$; 
    \ENDIF
   \ENDIF
  \ENDFOR
 \STATE $M\gets M+2^{\lambda_{x_1\cdots x_{i_r}}} L$;
 
 \ELSE
  \STATE $L\gets 1$;
  \FOR{$k=1$ to $\mu$}
    \IF{$\added{\bar{\varphi}_{u,[1,r]}(b_u(k))} + 1\leq k$}
     \STATE $L\gets 0$;
     \STATE {\bf Return}
    \ELSE
     \STATE $L\gets (2^{\added{\bar{\varphi}_{u,[1,r]}(b_u(k))}+1}-2^k)L$; 
    \ENDIF
  \ENDFOR
  \STATE $M\gets M+2^{\lambda_{x_1\cdots x_{i_r}}} L/2$;
 \ENDIF
\ENDFOR

\end{algorithmic}
\end{algorithm}
\end{figure}

\begin{remark}
As $m$ approaches infinity, the complexity of calculating $\added{\varphi_{u,[1,r-\mu]}(k), \bar{\varphi}_{u,[r-\mu+1,r]}(k)}$ and $b_u(k)$ is $O(m)$ since $u$ is vector with length $r+\mu\leq m$. So the complexity of calculating $s_u(k)$ is $O(m)$. Then the complexity of calculating $\sigma_u$ and further $|B_u|$ is $O(m^2)$. The size of $T_{\mu}$ is $\binom{m}{r}\binom{m-r+\mu}{\mu}$. Therefore, the time complexity of Algorithm \ref{alg:2} is $O(m^2\binom{m}{r}\binom{m-r+\mu}{\mu})$.
\end{remark}

\begin{example}
\replaced{Let $C(\MI)$ be the polar code defined in Example \ref{ex:0}.}{Let $C(\MI)$ be a [128,80] decreasing polar code where $\MI$ is generated by $\MI_{\min}=\{x_1x_2x_5x_7,x_1x_3x_4x_7,x_1x_4x_5x_6,x_2x_3x_5x_6,x_4x_6x_7\}$.} We have calculated $A_8((C(\MI))$ and $A_{12}((C(\MI))$ in Example \ref{ex:1} and \ref{ex:2}. Now we calculate $A_{14}(C(\MI))$ with $\mu = 3$.

For example, if $u = (1,2,5,7,3,4,7)\in T_3$ then $\lambda_{x_1x_2x_5x_7} = 9$. For the choice of $f_5$, we have $F(f_5) = 3$, $x_3$ does not appear in $x_2x_5x_7$ and $\added{\bar{\varphi}_{u,[1,1]}(3)}=2$, so there are $2^{2-1+1}=4$ choices for $f_5$. Similarly there are 4 choices for $f_6$. For $f_7$, we have $F(f_7) = 7$ and $x_7$ appears in $x_2x_5x_7$, since $\added{\varphi_{u,[2,4]}}(7)=2$, there are $2^{6-3+1} - 2^{2+1} = 8$ choices for $f_7$. Therefore, $|B_u| = 2^9\times 4\times 4\times 8 = 65536$.

In another example, if $v = (1,4,5,7,4,5,7)\in T_3$ then $\lambda_{x_1x_4x_5x_7} = 11$. Since $x_1x_3x_5x_7\notin\MI$ and $x_1x_2x_5x_7\in\MI$, $b_v(1) = 2, \added{\bar{\varphi}_{v,[2,4]}(b_v(1))}=2$. Similarly, $b_v(2) = 3, \added{\bar{\varphi}_{v,[2,4]}(b_v(2))} =3,  b_v(3) = 6, \added{\bar{\varphi}_{v,[2,4]}(b_v(3))} = 4$. Hence, $f_7, f_6, f_5$ have $2^2-2 = 2, 2^3-2^2 = 4, 2^4 - 2^3 = 16$ choices, respectively. Therefore, $|B_v| = 2^{11}\times 2\times 4\times 16 /2= 131072$.

By Theorem \ref{thm1} and \ref{thm:m2}, it can be calculated that $A_{14}(C(\MI)) = \sum_{v\in T_{\mu}}2^{\lambda_{x_{v_1}}\cdots x_{v_{r-\mu}}}\sigma_v = 1835008$.
\end{example}

\section{Number of Low-weight polynomials}

In this section, we provide the closed-form expressions and the enumeration complexity for the number of codewords with weight less than $2w_{\min}$.

By Corollary \ref{cor1}, Theorem \ref{thm:m1} and Theorem \ref{thm:m2}, we have

\begin{theorem}\label{thm:me}
Let $(C(\MI))$ be an $r$-order monomial code and $\mu$ be a positive integer. Denote $M = A_{2^{m-r+1}-2^{m-r+1-\mu}}(C(\MI))$.

When $\mu = 1$,
\begin{equation}
M = \sum_{f \in \MI_r} 2^{\lambda_f}.
\end{equation}

When $\mu = 2$,
\begin{equation}
M = \sum_{u\in S_{\mu}}2^{\lambda_{x_{i_1}\cdots x_{i_{r-2}}}+\alpha_u + \beta_u+ 2\gamma_u}.
\end{equation}

When $3\leq \mu \leq \min\{(m-r+2)/2,m-r,r\}$,
\begin{equation}
M = \sum_{u\in T_{\mu}} 2^{\lambda_{x_{i_1}\cdots x_{i_{r-2}}}}\sigma_u +\sum_{u\in S_{\mu}}2^{\lambda_{x_{i_1}\cdots x_{i_{r-\mu}}}+\alpha_u + \beta_u+ 2\gamma_u}.
\end{equation}

\end{theorem}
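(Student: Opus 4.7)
The plan is to assemble Corollary \ref{cor1}, Theorem \ref{thm:m1}, and Theorem \ref{thm:m2} case by case, with Theorem \ref{thm1} supplying the underlying classification. I would begin by recalling that every non-zero polynomial in $C(\MI)$ with weight strictly less than $2w_{\min}=2^{m-r+1}$ has weight $2^{m-r+1}-2^{m-r+1-\mu}$ for a unique positive integer $\mu$, and, by Theorem \ref{thm1}, admits a representation as either a type-I polynomial (valid for $2\mu \leq m-r+2$) or a type-II polynomial (valid for $\mu \geq 3$, $\mu \leq r$, and $\mu \leq m-r$). The three parameter windows in the statement of Theorem \ref{thm:me} are exactly the windows in which these two families, separately or jointly, can be nonempty.

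For $\mu=1$, only type-I is available and the representation $P(g_1,\ldots,g_r)$ collapses to a product $g_1 \cdots g_r$ of $r$ linearly independent linear polynomials, which is precisely what Corollary \ref{cor1} counts. For $\mu=2$, type-II is still forbidden (since it requires $\mu \geq 3$), so Theorem \ref{thm:m1} applied with $\mu=2$ delivers the closed form directly. For $\mu \geq 3$ both families appear, and the task reduces to establishing $A_\mu \cap B_\mu = \varnothing$, after which Theorems \ref{thm:m1} and \ref{thm:m2} supply the two summands. To prove disjointness, I would appeal to Lemma \ref{lemma2}: any polynomial with a linear factor admits a unique restricted decomposition $h_1 \cdots h_k \cdot h$ with $h$ having no linear factor, so the integer $k$ is an invariant of the polynomial. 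A proper type-I polynomial has $k=r-2$ with quadratic remainder $g_{r-1}g_r+\cdots+g_{r+2\mu-3}g_{r+2\mu-2}$, while a proper type-II polynomial with $\mu \geq 3$ has $k=r-\mu$ with degree-$\mu$ remainder $g_{r-\mu+1}\cdots g_r + g_{r+1}\cdots g_{r+\mu}$; since $r-\mu < r-2$ whenever $\mu \geq 3$, the two families correspond to different values of $k$ and are therefore disjoint.

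The main obstacle is the linear-factor freeness of the two remainder polynomials, which is needed so that Lemma \ref{lemma2} really pins down the invariant $k$. The type-I remainder $\sum_{i=1}^{\mu} g_{r-3+2i}g_{r-2+2i}$ is a non-degenerate quadratic form in $2\mu \geq 4$ linearly independent linear forms and is therefore irreducible over $\FF_2$, so it has no linear factor. The type-II remainder is essentially governed by Lemma \ref{lemma5}, which fixes the two product decompositions $g_{r-\mu+1}\cdots g_r$ and $g_{r+1}\cdots g_{r+\mu}$ up to swap, thereby preventing the extraction of any extra linear factor. Once these irreducibility statements are secured, the three closed-form expressions claimed in Theorem \ref{thm:me} follow by direct substitution of the counts provided by Corollary \ref{cor1}, Theorem \ref{thm:m1}, and Theorem \ref{thm:m2} into $M = |A_\mu| + |B_\mu|$.
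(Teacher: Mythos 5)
Your proposal is correct and follows essentially the same route as the paper: the paper's own proof of Theorem \ref{thm:me} is simply the assembly of Corollary \ref{cor1}, Theorem \ref{thm:m1} and Theorem \ref{thm:m2}, with Theorem \ref{thm1} guaranteeing that no codeword of weight below $2w_{\min}$ is missed. What you add beyond the paper is an explicit proof that $A_\mu\cap B_\mu=\varnothing$ for $\mu\geq 3$, via the invariant $k$ (the number of linear factors in the unique restricted form of Lemma \ref{lemma2}); the paper leaves this disjointness implicit, inheriting it from the Kasami--Tokura classification, so making it explicit is a genuine improvement. Within that addition, one justification is loose: your claim that the type-II remainder $g_{r-\mu+1}\cdots g_r+g_{r+1}\cdots g_{r+\mu}$ has no linear factor is \emph{not} a consequence of Lemma \ref{lemma5}, which only asserts uniqueness (up to swap) of the two product summands and says nothing about divisibility by a linear polynomial. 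The fact itself is true and is needed (also implicitly by the paper, in the application of Lemma \ref{lemma2} inside the proof of Theorem \ref{thm4}), but the correct argument is a substitution one: after a change of coordinates write the remainder as $x_1\cdots x_\mu+x_{\mu+1}\cdots x_{2\mu}$; restricting to any hyperplane $\{\ell=0\}$ and eliminating one variable, say $x_1$, every surviving monomial from the first product is divisible by $x_2$, so the monomial $x_{\mu+1}\cdots x_{2\mu}$ cannot cancel and the restriction is nonzero, whence no linear $\ell$ divides the remainder. Your rank argument for the type-I quadratic remainder is fine as stated. With that one repair, the disjointness claim, and hence the full proposal, is sound.
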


The closed-form expressions for case $\mu = 1$ and 2 have been discovered in \cite{Bardet2016} and \cite{Rowshan2023}.

\begin{remark}
As $m$ approaches infinity, the complexity of calculating the number of codewords with weight less than 2$w_{\min}$ in a decreasing monomial code is determined by the combined complexity of Algorithm \ref{alg:1} and \ref{alg:2}, which is given by $O(m^2(\binom{m}{r+2\mu-2}+\binom{m}{r}\binom{m-r+\mu}{\mu}))$. 

When $r$ is a constant, i.e., independent of $m$, the time complexity is $O(m^{2+\max\{r+2\mu-2, r+\mu\}}) = O((\log N)^{2+\max\{r+2\mu-2, r+\mu\}})$ where $N$ is code length. 

When $r$ is proportional to $m$, i.e., $r = am, a\in (0,1)$, the time complexity is $O(m^22^{\max\{1,2h(a)\}m}) = O((\log N)^2N^{\max\{1,2h(a)\}})$ where $h(a) = -a\log a - (1-a)\log(1-a)$.
\end{remark}

Therefore, our algorithms provide polynomial time complexity for calculating the number of codewords with weight less than $2w_{\min}$.  In comparison to the approach of using SCL decoding with a very large list to collect low-weight codewords \cite{Li2012}, our algorithms are efficient and guarantee an accurate collection of low-weight codewords.

\begin{example}
\added{We check our results for RM$(6,3)$. From Theorem \ref{thm:me}, we calculate that $A_{8}(C(\MI)) = 11160$, $A_{12}(C(\MI)) =  174999$ and $A_{14}(C(\MI))= 28555680$, which is consistent with the results obtained from Theorem \ref{thm1} \cite{Kasami1970}. Therefore, our result is correct for RM codes.}

\added{Next we examine our findings for $C(\MI)$ where $\MI$ is generated by $\MI_{\min}=\{15 ,21, 55\}$. Since $C(\MI)$ is a subcode of RM$(6,3)$, we list all the polynomials of RM$(6,3)$ with weight less than 16 and verify whether each one is a codeword in $C(\MI)$. The result shows that $A_{8}(C(\MI)) = 2456$, $A_{12}(C(\MI)) =  142208$ and $A_{14}(C(\MI))= 868325$, which yields the same value as Theorem \ref{thm:me}.}

\added{We also validate our results for the code whose weight distribution was previously calculated in \cite[Table VIII]{Yao2023}. By calculating the codewords with less than $16$ in $C(\MI)$, we obtain $A_{8}(C(\MI)) = 304$, $A_{12}(C(\MI)) =  768$ and $A_{14}(C(\MI))= 0$. These results perfectly align with those presented in \cite[Table VIII]{Yao2023}.  It is worth noting that the complexity of the method in \cite{Yao2023} is exponential, while our algorithm runs in polynomial time relative to the code length.}

\end{example}

\begin{example}\label{Ex11}

Table \ref{tab1} \added{and Table \ref{tab2}} present a comparison of the weight distribution of different [128,80] \added{and [256, 94]} decreasing monomial codes with minimum distance 8 \added{and 16, respectively}. It is worth noting that different constructions yield significantly different weight distributions.

\begin{table}[htbp]
\begin{center} 
\setlength{\tabcolsep}{1.5pt}{% column separation
\begin{tabular}{c|c|c|c|c} 
\hline
& $\MI_{\text{min}}$	& $A_{8}$ & $A_{12}$ & $A_{14}$
 \\
\hline
code 1 & $\{23, 44, 50, 70, 73\}$ & 5680 & 508672 & 1835008\\

code 2 & $\{15, 28, 73\}$ & 5168 & 367360 & 1376256\\

code 3 & $\{23, 38, 97\}$ & 7216 & 596736 & 4128768\\

code 4 & $\{29, 39, 41\}$  & 8752 & 897792 & 6881280\\
\hline
\end{tabular}}
\caption{The number of codewords in [128,80] codes with weight less than 16}
\label{tab1}
\end{center}
\end{table}

\begin{table}[htbp] 
\begin{center} 
\setlength{\tabcolsep}{1.5pt}{% column separation
\begin{tabular}{c|c|c|c|c} 
\hline
& $\MI_{\text{min}}$	& $A_{16}$ & $A_{24}$ & $A_{28}$
 \\
\hline

code 1 & $\{63, 91, 103, 120, 204, 210, 225\}$ & 1584 & 49920 & 0\\

code 2 & $\{63, 94, 103, 151, 180, 204, 209\}$ & 2096 & 84376 & 0\\

code 3 & $\{63, 94, 107, 151, 180, 202, 209\}$ & 2608 & 127740 & 65536\\

code 4 & $\{63, 92, 107, 155, 167, 201\}$  & 4144 & 264960 & 589824\\
\hline
\end{tabular}}
\caption{\added{The number of codewords in [256,94] codes with weight less than 32}}
\label{tab2}
\end{center}
\end{table}

\end{example}

\iffalse
The following theorem provide the asymptotic bound for the number of codewords with weight less than $2w_{\min}$.

\begin{theorem}
$$
\log A_{2^{m-r+1}-2^{m-r+1-\mu}}(C(\MI)) = O(m^2).
$$
\end{theorem}

\begin{proof}
For any $u = (i_1,\cdots,i_{r+2 \mu-2})\in S_{\mu}$, all of $\lambda_{x_{i_1}\cdots x_{i_{r-\mu}}}, \alpha_u, \beta_u$ and $\gamma_u$ are $O(m^2)$. For any $u = (i_1,\cdots,i_{r+2\mu})\in T_{\mu}$, all of $\varphi_{u,[1,r-\mu]}(k)$, $\varphi_{u,[r-\mu+1,r]}(k)$ and $b_u(k)$ are $O(m)$.  Then $s_u(k) = O(2^m)$, $\sigma_u = O(2^{m^2})$. The size of $S_{\mu}$ and $T_{\mu}$ is $ O(2^{2m})$. Thus, $\log A_{2^{m-r+1}-2^{m-r+1-\mu}}(C(\MI)) = \log (|S_{\mu}| + |T_{\mu}|) + O(m^2) = O(m^2)$. 
\end{proof}
\fi

\begin{example}
\added{Figure \ref{fig1} illustrates the union bounds \cite{Sason2006} calculated by the weight distribution less than $2w_{\min}$ and the performance of SCL decoding with list size 8 for four different [128, 80] polar codes as outlined in Table \ref{tab1}. In these cases, the performance of SCL8 decoding is close to the ML decoding. Here the union bounds are calculated as $\sum_{d<2w_{\min}}A_d Q(-\sqrt{d}/\sigma)$ where $\sigma$ is the standard variance of additive white Gaussian noise and $Q(x) = \frac{1}{\sqrt{2\pi}}\int_x^{\infty} e^{-z^2/2} dz$ is the distribution function of the standard normal distribution. The simulation results show that the union bound calculated by our results is closed to the performance of ML decoding, particularly at high signal-to-noise ratios (SNR).}

\begin{figure}[!t]
\centering
\includegraphics[width=0.5\textwidth, trim = 80 220 80 220, clip]{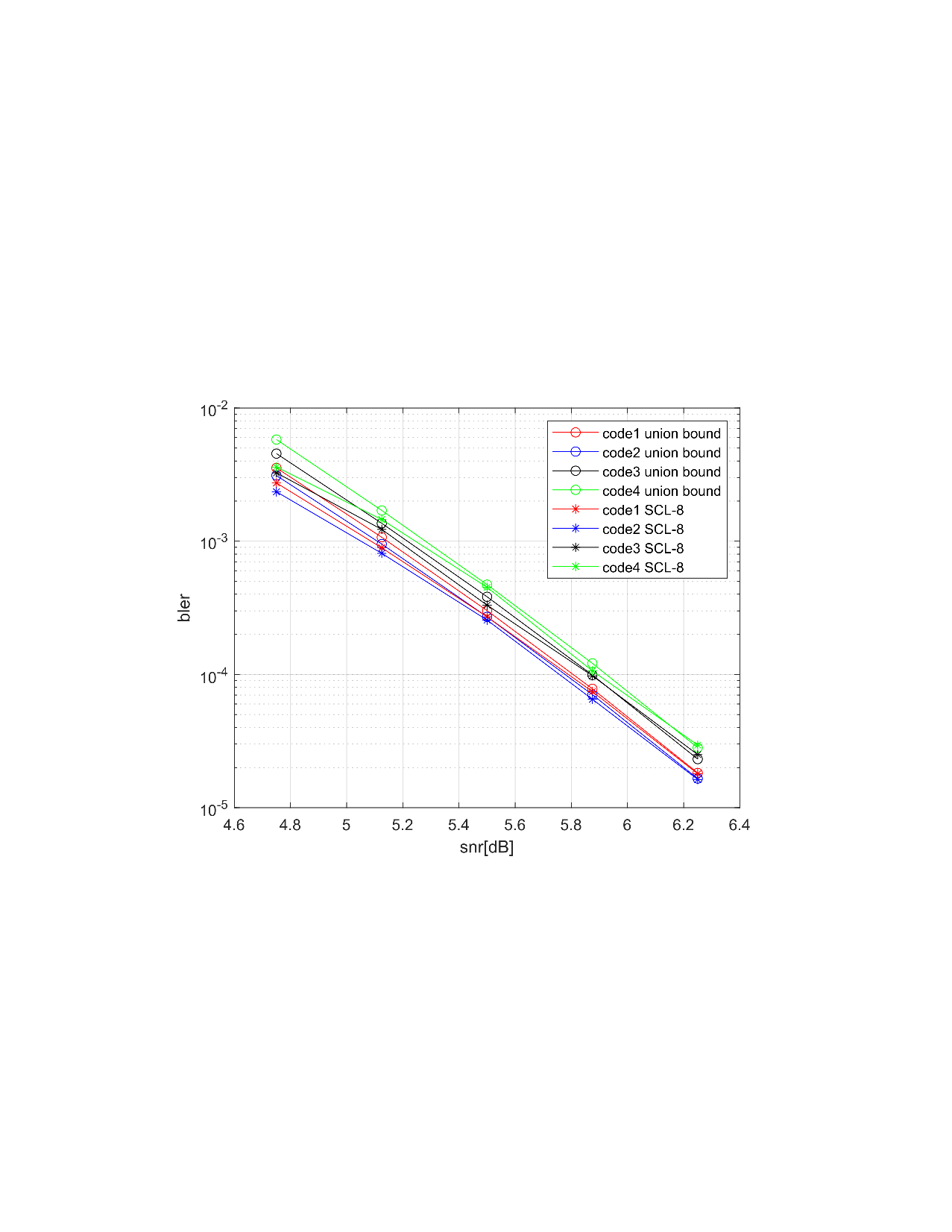}
\caption{\added{SCL performance and union bound for different [128, 80] polar codes}}
\label{fig1}
\end{figure}
\end{example}

\section{Conclusion}

In this paper, we present a comprehensive framework for classifying and enumerating the codewords with weights less than $2w_{\min}$ in decreasing polar codes. We provide the closed-form expressions and the algorithms to calculate the number of these codewords. Importantly, the time complexity of the algorithms is in polynomial with respect to the code length. Our work contributes to a better understanding of the algebraic structure for polar codes and may provide valuable insights that can potentially guide practical constructions in the future.

\appendices
\section*{Appendix A \\ Proof of Lemma \ref{lemma2}}

\begin{proof}
Apparently, each \added{non-constant} polynomial $f$ \added{with linear factors} can be expressed as $g_1 \cdots g_k g$ where $g_i$ are linear polynomials and $g$ has no linear factors. First we prove that $f$ can be rewritten to satisfy the conditions a) and b).

Assume $g_k$ is a linear combination of $g_1,\cdots,g_{k-1}$, i.e., $g_k = \sum_{i=1}^{k-1}a_ig_i + a_0$. Since $g_1(\sum_{i=1}^{k-1}a_ig_i + a_0) = g_1(a_1 + \sum_{i=2}^{k-1}a_ig_i + a_0)$, $g_1 \cdots g_k =  g_1 \cdots g_{k-1}\sum_{i=0}^{k-1}a_i$. As $f$ is non-zero, $\sum_{i=0}^{k-1}a_i = 1$ and $g_1\cdots g_k = g_1\cdots g_{k-1}$. Therefore, without loss of generality we can assume that $g_1,\cdots,g_k$ are linearly independent.

Assume $F(g_i) = F(g_j) = t$ for some $1\leq i<j\leq k$. According to Lemma \ref{lemma1}, we replace $x_t$ in $g_j$ by $g_i + x_t + 1$. The new polynomial is $g_j' = g_i+g_j+1$, then $g_ig_j = g_ig_j'$ and $F(g_j')<F(g_j)$. Note that this procedure does not break the condition a). As $\sum_{i=1}^k F(g_i)$ cannot decrease infinitely, after finite procedures, $F(g_1),\cdots,F(g_k)$ will be all different from each other. Then we reorder $g_1,\cdots,g_k$ so that $F(g_1)<\cdots<F(g_k)$.

Assume $x_t = F(g_i)$ appears in some $g_j$ with $j> i$, by Lemma \ref{lemma1}, replacing $x_t$ in $g_j$ by $g_i + x_t + 1$ does not change the value of $g_ig_j$. Note that this procedure does not break the condition a) or change $F(g_j)$. Similar procedure can also be applied to $g$ to avoid the appearance of $x_{F(g_i)}$.  In this way, $x_t$ only appears in $g_i$, and thus condition b) satisfies.

Next, we prove the second part of the lemma by induction. When $k=0$, since $g$ has no linear factor, we have $j=k$ and $g=h$. 

For the inductive step $k-1\to k$, we replace $x_{F(g_i)}$ by $g_i + x_{F(g_i)} + 1$ for all $i \leq k$ on both sides. After this replacement, $g_i$ becomes 1, $g$ does not change since $x_{F(g_i)}$ does not appear in $g$, and $h_t$ and $h$ becomes $h_t'$ and $h'$ respectively. Then by Lemma \ref{lemma1+}, $g =  h_1' \cdots h_j' \cdot h'$. Since $g$ has no linear factors, $h_1' \cdots h_j' = 1$. 

Suppose $b\in\FF_2^m$ satisfying $g_1(b)\cdots g_k(b) = 1$, then $b_{F(g_i)}=g_i(b) + b_{F(g_i)} + 1$, it follows that $h_t(b)= h'_t(b)$ and then $h_1(b) \cdots h_j(b) = 1$. Similarly, $h_1(b') \cdots h_j(b') = 1$ implies $g_1(b') \cdots g_k(b') = 1$ for $b'\in\FF_2^m$. Hence, $h_1 \cdots h_j = g_1 \cdots g_k$. 

Since deg$(h_1 \cdots h_j) = $ deg$(g_1 \cdots g_k)$, we deduce that $j=k$. Assume $F(h_k) > F(g_k)$, then $x_{F(g_k)}$ appears in $h_1 \cdots h_k$ but not in $g_1 \cdots g_k$, which means the equality cannot hold. Similarly, $F(h_k) < F(g_k)$ is also impossible. Therefore, $F(h_k) = F(g_k)$. 

Since $x_{F(g_k)}$ only appears in $g_k$ and $h_k$, $f =x_{F(g_k)}g_1\cdots g_{k-1} g + g'' = x_{F(g_k)}h_1 \cdots h_{k-1} h + h''$, where $x_{F(g_k)}$ does not appear in $g''$ or $h''$. Therefore, we have $g_1\cdots g_{k-1} g =h_1 \cdots h_{k-1} h$. By the inductive hypothesis, $g_i = h_i$ for $1\leq i\leq k-1$ and $g=h$. Similarly, expanding $f$ in terms of $x_{F(g_{k-1})}$ implies $g_k = h_k$.

\end{proof}

\section*{Appendix B  \\ Proof of Lemma \ref{lemma3}} 

\begin{proof}
Denote $a = \sum_{i=1}^{2\mu} F(f_i)$ to be the sum of indices of the largest terms.

First, if $F(f_{2 i-1} ) = F(f_{2 i}) = t$ for some $1\leq i\leq \mu$, as in Lemma \ref{lemma1}, we replace $f_{2 i}$ by $f_{2 i-1} + f_{2 i} + 1$. This procedure does not change the value of $f$ or break the linear independence condition, and $a$ will decrease. 

Second, if $F(f_{2i-1}) = F(f_{2j-1}) = t$ for some $i\neq j$, denote $f_{2i-1} = x_t + g$ and $f_{2j-1} = x_t + h$. Without loss of generality, let $F(f_{2j})\leq F(f_{2i})$. Since
\begin{equation}
(x_t+g)f_{2i} + (x_t+h)f_{2j} = (x_t+g)(f_{2i}+f_{2j})+(g+h)f_{2j},
\end{equation}
we can rewrite $f$ so that $a$ decreases. Similarly, this procedure does not change the value of $f$ or break the linear independence condition. 

When $F(f_{2i-1}) = F(f_{2j}), F(f_{2i}) = F(f_{2j-1})$, or $F(f_{2i}) = F(f_{2j})$, the procedure is similar.

As $a$ cannot decrease infinitely, after finite procedures described above, $f$ can be rewritten to satisfy that all the largest terms are different.

\end{proof}

\section*{Appendix C  \\ Proof of Lemma \ref{lemma4}} 

\begin{proof}
We prove this lemma by induction. When $\mu = 1$, the lemma is implied by Lemma \ref{lemma2}. For the induction step $\mu-1\to\mu$, if $F(f_{2\mu}) > F(g_{2\mu})$, then $x_{F(f_{2\mu})}$ appears in the left side but not in the right side, which means the equality cannot hold. Similarly, $F(f_{2\mu}) < F(g_{2\mu})$ is also impossible. Therefore, $F(f_{2\mu}) = F(g_{2\mu}) $. Since $x_{F(f_{2\mu})}$ does not appear in other linear polynomials, we have $f_{2\mu-1} = g_{2\mu-1}$. Denote $F(f_{2\mu-1}) = t$.

Define $f'_j$ to be the polynomial by replacing every $x_t$ in $f_j$ by $f_{2\mu-1} + x_t$, and $g'_j$ is defined similarly. Then $g'_{2\mu-1} = f'_{2\mu-1} = 0$. By Lemma \ref{lemma1+}, $\sum_{i=1}^{\mu-1} f'_{2 i-1} f'_{2 i} = \sum_{i=1}^{\mu-1} g'_{2 i-1} g'_{2 i}$. Since $x_t$ cannot be the largest term of other linear polynomials, replacing $x_t$ by $f_{2\mu-1} + x_t$ does not change the largest terms of other polynomials. Then by induction hypothesis, $F(f'_j) = F(g'_j)$ for $1\leq j\leq 2\mu-2$. Therefore, $F(f_i) = F(g_i)$ for all $1\leq i\leq 2\mu$.
\end{proof}

\section*{Appendix D  \\ Proof of Theorem \ref{thm:m1}} 

\added{We first prove the following lemma.}

\begin{lemma}\label{lemma7}
\added{Suppose $\sum_{t=1}^{\mu} h_tg_t $ is proper and satisfies the conditions in Lemma \ref{lemma4}, then we can rewrite $g_t = \sum_{1\leq j\leq m}b^{(t)}_j x_j +b^{(t)}_0$ and $h_t = \sum_{1\leq j\leq m}c^{(t)}_j x_j + c^{(t)}_0$, such that the conditions in Lemma \ref{lemma4} still hold, in addition, $\forall 1 \leq t,t' \leq \mu$,}

\added{1) $b^{(t)}_{F(g_{t'})} = 0$, if $F(g_t)>F(g_{t'})> F(h_{t'}) >F(h_t)$;}

\added{2) $b^{(t)}_{F(h_{t'})} = 0$, if $F(g_t)>F(h_{t'})>F(h_t)$;}

\added{3) $b^{(t')}_{F(h_{t'})} = 0$ ($\sum_{t=1}^{\mu} h_tg_t $ is still proper).}

\end{lemma}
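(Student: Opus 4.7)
My plan is to establish the canonical form by repeatedly applying a small set of algebraic moves that preserve both $f=\sum_t h_t g_t$ and the ordering conditions of Lemma \ref{lemma4}, cancelling the forbidden coefficients one at a time. The three moves I would use are: \emph{Move A}, which replaces $g_t\to g_t + c\,h_{t'}$ and $g_{t'}\to g_{t'}+c\,h_t$ for some $c\in\FF_2$ and $t\neq t'$; \emph{Move B}, which replaces $g_t\to g_t + c\,g_{t'}$ and $h_{t'}\to h_{t'}+c\,h_t$; and \emph{Move C}, the Lemma \ref{lemma1} operation on the factor $h_t g_t$ that replaces $g_t$ by $g_t+c(h_t+1)$. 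In each case, invariance of $f$ over $\FF_2$ follows from the identity $2ab=0$ (for Moves A and B) or from $h_t^2=h_t$ on $\FF_2^m$ (for Move C). Under the ordering hypotheses $F(h_t)<F(h_{t'})<F(g_{t'})<F(g_t)$ used below, none of these moves alters any $F(g_s)$ or $F(h_s)$, so the Lemma \ref{lemma4} conditions persist throughout.

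Condition 3) is immediate from the proper hypothesis: Lemma \ref{lemma2} applied to the factor $h_t g_t$ forces $x_{F(h_t)}$ to appear only in $h_t$, hence $b^{(t)}_{F(h_t)}=0$. For Condition 2) I would iterate over every pair $(t,t')$ with $F(g_t)>F(h_{t'})>F(h_t)$, in decreasing order of the target index $j=F(h_{t'})$, and apply Move A with $c=b^{(t)}_{F(h_{t'})}$; since the coefficient of $x_{F(h_{t'})}$ in $h_{t'}$ equals $1$, this exactly zeros $b^{(t)}_{F(h_{t'})}$. Condition 1) is handled analogously for pairs $(t,t')$ with $F(g_t)>F(g_{t'})>F(h_{t'})>F(h_t)$ via Move B with $c=b^{(t)}_{F(g_{t'})}$. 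After each Move A or B, Condition 3) may be violated by the single residual term $c\,c^{(t')}_{F(h_t)}$ (respectively $c\,b^{(t')}_{F(h_t)}$) newly introduced at position $F(h_t)$ in $g_t$; I would restore it immediately by applying Move C with $c$ equal to that residual value.

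The main obstacle is showing that this sequence of moves never silently re-introduces a coefficient we already cleared. The key observation is that every ``side effect'' of Move A, Move B, or Move C occurs at positions $\leq F(h_t)$ in the affected $g$ or $h$. However, every index targeted by Conditions 1) and 2)---namely $F(g_{t'})$ or $F(h_{t'})$ with $F(h_{t'})>F(h_t)$---lies strictly above $F(h_t)$. Hence Move C cannot disturb any previously cleared coefficient in $g_t$; and the side effects of Move A on $g_{t'}$ (respectively Move B on $h_{t'}$) lie at positions at or below $F(h_t)$, which is in turn below every forbidden index in the conditions for $g_{t'}$, since the latter all exceed $F(h_{t'})>F(h_t)$. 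Processing the cancellations for each $t$ in decreasing order of $j$, and applying Move C immediately after each Move A/B, therefore yields a representation simultaneously satisfying Conditions 1), 2), and 3) while preserving Lemma \ref{lemma4}.
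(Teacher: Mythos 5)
Your proposal is correct and follows essentially the same route as the paper: your Moves B, A, and C are exactly the paper's three rewriting identities $h_tg_t+h_{t'}g_{t'}=h_t(g_t+g_{t'})+(h_t+h_{t'})g_{t'}$, $h_tg_t+h_{t'}g_{t'}=h_t(g_t+h_{t'})+h_{t'}(g_{t'}+h_t)$, and $h_{t'}g_{t'}=h_{t'}(g_{t'}+h_{t'}+1)$, applied to the same three conditions. Your non-interference argument---all residual changes land at indices strictly below the targeted index, so processing targets in decreasing order (with immediate restoration of properness) never undoes earlier cancellations---is precisely the paper's descending-order argument.
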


\begin{proof}
\added{First, if $b^{(t)}_{F(g_{t'})} = 1$ for some $F(g_t)>F(g_{t'})> F(h_{t'}) >F(h_t)$, since}
\begin{equation}
\added{h_tg_t + h_{t'}g_{t'} = h_t(g_t+g_{t'}) + (h_t+h_{t'})g_{t'},}
\end{equation}
\added{we can replace $g_t$ by $g_t+g_{t'}$ and $h_{t'}$ by $h_t+h_{t'}$ such that $b^{(t)}_{F(g_{t'})} = 0$ without changing $b^{(t)}_s$ for $s>F(g_{t'})$ and $c^{({t'})}_s$ for $s>F(h_t)$, so the largest terms of $g_t$ and $h_t$ are not changed, thus the conditions in Lemma \ref{lemma4} still hold.}

\added{Similarly, if $b^{(t)}_{F(h_{t'})} = 1$ for some $F(g_t)>F(h_{t'})>F(h_t)$, since}
\begin{equation}
\added{h_tg_t + h_{t'}g_{t'} = h_t(g_t+h_{t'}) + h_{t'}(g_{t'}+h_t),}
\end{equation}
\added{we can replace $g_t$ by $g_t+h_{t'}$ and $g_{t'}$ by $h_t+g_{t'}$ such that $b^{(t)}_{F(h_{t'})} = 0$ without changing $b^{(t)}_s$ for $s>F(h_{t'})$ and $b^{({t'})}_s$ for $s>F(h_t)$.}

\added{If $b^{(t')}_{F(h_{t'})} = 1$, since}
\begin{equation}
\added{h_{t'}g_{t'} = h_{t'}(g_{t'}+h_{t'} + 1),}
\end{equation}
\added{we can replace $g_{t'}$ by $g_{t'}+h_{t'}+1$ such that $b^{(t')}_{F(h_{t'})} = 0$ without changing $b^{(t')}_s$ for $s>F(h_{t'})$.}

\added{Since all the above procedures do not influence $b^{(t)}_s$ with larger subscript, we sort $F(g_{t'})$ and $F(h_{t'})$ for all $1\leq {t'}\leq \mu$ in descending order. Then for all $1\leq t\leq \mu$, check whether conditions 1)-3) satisfy, if not, apply the above procedures (12)-(14). Note that the previously satisfied conditions remain intact throughout the subsequent steps, we can rewrite $\sum_{t=1}^{\mu} h_tg_t$ to satisfy conditions 1)-3).}

\end{proof}

\added{Now we are going to calculate $|A_u|$. By Lemma \ref{lemma2} and Lemma \ref{lemma7}, in order to avoid counting the same polynomials multiple times, we can fix some coefficients to zeros, thus the number of polynomials in $A_u$ is at most 2 to the power of the number of the remaining coefficients, which is $2^{\alpha_u+\beta_u+2\gamma_u}$ as shown in Theorem \ref{thm:m1}. Now we prove that any value of these coefficients yields a different polynomial. Therefore, the number of low-weight codewords is exactly $2^{\alpha_u+\beta_u+2\gamma_u}$.}

\begin{proof}[Proof of Theorem \ref{thm:m1}]
Denote $u = (i_1,\cdots,i_{r+2 \mu-2})\in S_{\mu}$. We calculate the number of different polynomials $f = P(f_1,\dots, f_{r-2}, h_1,g_1, \dots, h_{\mu}, g_{\mu})\in A_u$. Since $f$ is proper, there are $ 2^{\lambda_{x_{i_1}\cdots x_{i_{r-2}}}}$ choices for $f_1,\dots,f_{r-2}$. 

Next we calculate the number of choices for $g = \sum_{t=1}^{\mu} h_t g_t$. Let us write $g = \sum_{1\leq j< k\leq m}a_{j,k}x_jx_k + \sum_{1\leq k\leq m}a_{0,k}x_k + a_{0,0}$, $g_t = \sum_{1\leq s\leq m}b^{(t)}_s x_s + \replaced{ b^{(t)}_0}{b_0}$ and $h_t = \sum_{1\leq s\leq m}c^{(t)}_s x_s + \replaced{ c^{(t)}_0}{c_0}$.  \added{Since $g = \sum_{t=1}^{\mu} h_t g_t$ and $x_j^2 =  x_j$, we have}
\begin{equation}\label{eq12}
\added{a_{j,k} = \sum_{t=1}^{\mu}(b^{(t)}_jc^{(t)}_k+b^{(t)}_kc^{(t)}_j)}
\end{equation}
\added{for $1\leq j< k\leq m$ and} 
\begin{equation}\label{eq13}
\added{a_{0,k} = \sum_{t=1}^{\mu}(b^{(t)}_kc^{(t)}_k+b^{(t)}_kc^{(t)}_0+b^{(t)}_0c^{(t)}_k)}
\end{equation}
\added{for $0\leq k\leq m$.} 

\added{Some of the coefficients in $g_t$ and $h_t$ are prefixed. Firstly, $b^{(t)}_{F(g_t)} = 1$ and $b^{(t)}_s  = 0$ for $s>F(g_t)$, $c^{(t)}_{F(h_t)} = 1$ and $c^{(t)}_s  = 0$ for $s>F(h_t)$.  Besides, since $f$ is proper, $b^{(t)}_s = 0$ if $s=F(h_t)$ or $s=F(f_l)$ for $1\leq l\leq r-2$, $c^{(t)}_s = 0$ if $s=F(f_l)$ for $1\leq l\leq r-2$. Furthermore, we assume the coefficients in $g_t$ which satisfy the conditions in Lemma \ref{lemma7} are fixed to zeros.}

\added{We are going to assign values to the remaining $b^{(t)}_s$ and $c^{(t)}_s$ in a specific order such that for each remaining $b^{(t)}_s$ and $c^{(t)}_s$, there exists one different coefficient $a_{j,k}$ determined by it, thus different values of $b^{(t)}_s$ and $c^{(t)}_s$ yield different polynomials.}\deleted{We will assign values to $b^{(t)}_s$ and $c^{(t)}_s$ in a specific order.} First, we assign values to $b^{(t)}_s$ and $c^{(t)}_s$ \added{appearing in the expanded formula of $a_{j,m}$ from $j=m-1$ to 0} \deleted{in order to determine $a_{j,m}$ for $1\leq j\leq m$ that are relevant to $x_m$}. Then, we assign values \added{to other $b^{(t)}_s$ and $c^{(t)}_s$} appearing in the expanded formula of $a_{j,m-1}$ \replaced{from $j=m-2$ to 0}{for $1\leq j\leq m-1$ relevant to $x_{m-1}$}, and so on. In this way, some $a_{j,k}$ can be freely determined by assigning specific values to $b^{(t)}_s$ and $c^{(t)}_s$ as desired, while other $a_{j,k}$ are determined by other coefficients in $f$ determined earlier. The number of choices for $g = \sum_{i=1}^{\mu} h_i g_i$ is exactly the 2 to the power of the number of free coefficients.

\added{Next, we specify the stage when we assign $b_s^{(t)}$ and $c_s^{(t)}$ in the predefined order. 
If $s > F(h_t)$, $b^{(t)}_s$ first appears in the coefficient of $x_{F(h_t)}x_s$. Thus, $b^{(t)}_s$ is assigned when determining $a_{F(h_t), s}$. If $s < F(h_t)$, $b^{(t)}_s$ is assigned when determining $a_{s, F(h_t)}$. And $c^{(t)}_s$ is assigned when determining $a_{s, F(g_t)}$, since $c^{(t)}_s$ first appears in the coefficient of $x_sx_{F(g_t)}$.}

\added{Now we prove that the first appearance of the free coefficients $b^{(t)}_s$ and $c^{(t)}_s$ are at different stages, that is, they are assigned to determine different $a_{j,k}$. It is clear that all the $b^{(t)}_s$ and $c^{(t)}_{s'}$ are assigned at different stages. Assume that for $t\neq t'$, $b^{(t)}_s$ and $b^{(t')}_{s'}$ are assigned when determining the same $a_{j,k}$. If $s>F(h_t)$ and $s'>F(h_{t'})$, the subscripts of $a_{F(h_t), s}$ and $a_{F(h_{t'}), s'}$ are the same, thus $F(h_t) = F(h_{t'})$, which is against that the largest terms are different. Similarly, the case $s<F(h_t)$ and $s'<F(h_{t'})$ is impossible. W.l.o.g, assume $s>F(h_t)$ and $s'<F(h_{t'})$, we have $j = F(h_t) = s'$, $k = s = F(h_{t'})$. Therefore,
$$
F(h_t) = s' < F(h_{t'}) = s < F(g_{t}).
$$
Since we fix the coefficients in $g_t$ which satisfy the conditions 2) in Lemma \ref{lemma7} to zeros, $b^{(t)}_{s} = b^{(t)}_{F(h_{t'})} = 0$ needs not be considered in our procedure. Therefore, all $b^{(t)}_s$ and $b^{(t')}_{s'}$ are assigned at different stages.}

\added{Assume that for $t\neq t'$, $b^{(t)}_s$ and $c^{(t')}_{s'}$ are assigned when determining the same $a_{j,k}$. If $s>F(h_t)$, the subscripts $j = F(h_t) = s', k = F(g_{t'}) = s$. Therefore,
$$
F(h_t) = s' < F(h_{t'}) < F(g_{t'}) = s < F(g_t). 
$$
Since we fix the coefficients in $g_t$ which satisfy the conditions 1) in Lemma \ref{lemma7} to zeros, $b^{(t)}_{s} = b^{(t)}_{F(g_{t'})} = 0$ needs not be considered in our procedure. If $s<F(h_t)$, then $F(h_t) = F(g_{t'})$, which is against that the largest terms are different. Therefore, $b^{(t)}_s$ and $c^{(t')}_{s'}$ are assigned at different stages.}

\added{Now assume that for $t\neq t'$, $c^{(t)}_s$ and $c^{(t')}_{s'}$ are assigned when determining the same $a_{j,k}$. Therefore, $F(h_t)= F(h_{t'})$, which is against hat the largest terms are different. }

\added{Note that all the free $b^{(t)}_s$ and $c^{(t)}_{s}$ are assigned at different stages, their values determine the corresponding $a_{j,k}$, since other terms in the expanded formula are already assigned. The number of $\sum_{i=1}^{\mu} h_i g_i$ is exactly 2 to the power of the number of free coefficients in $b^{(t)}_s$ and $c^{(t)}_{s}$. Now we count the number of the coefficients: }

\added{1) The number of $c^{(t)}_{s}$ with $s\neq i_1, \cdots, i_{r-2}$ is $\bar{\varphi}_{u,[1, r-2]}(i_{r-3+2t})$. }

\added{2) The number of $b^{(t)}_{s}$ with $s<F(h_t)$ and $s\neq i_1, \cdots, i_{r-2}$ is $\bar{\varphi}_{u,[1, r-2]}(i_{r-3+2t})$.}

\added{3) The number of $b^{(t)}_{s}$ with $s>F(h_t)$ and $s\neq i_l, 1\leq l\leq r+2\mu-2$ is $\bar{\varphi}_{u, [1,r+2\mu-2]}(i_{r-2+2t}) - \bar{\varphi}_{u, [1,r+2\mu-2]}(i_{r-3+2t}))$.}

\added{4) The number of $b^{(t)}_{s}$ with $s>F(h_t)$ and $s = F(g_k)$ is the number of $k$ satisfying $F(g_t)>F(g_k)> F(h_k)> F(h_t)$.} 

In conclusion, the number of \added{free} coefficients \deleted{which can be determined arbitrarily} is \added{$\ |\{(k,t)\mid 1\leq k<t\leq \mu, i_{r-2+2k}>i_{r-2+2t}> i_{r-3+2k}>i_{r-3+2t} \}|+ \sum_{t=1}^{\mu}(\bar{\varphi}_{u, [1,r+2\mu-2]}(i_{r-2+2t}) - \bar{\varphi}_{u, [1,r+2\mu-2]}(i_{r-3+2t})) + 2\sum_{t=1}^{\mu} \bar{\varphi}_{u,[1, r-2]}(i_{r-3+2t}) = $} $\alpha_u + \beta_u +2\gamma_u$, which implies Equation (\ref{eq4}). Equation (\ref{eq5}) follows from Equation (\ref{eq4}) and Theorem \ref{thm2}.
\end{proof}

\section*{Appendix E  \\ Proof of Theorem \ref{thm:m2}}

\begin{proof}
Denote $u = (i_1,\cdots,i_{r+\mu})\in T_{\mu}$. From the proof of Theorem \ref{thm4}, $Q(f_1 \cdots  f_{r+\mu}) = Q(g_1 \cdots g_{r+\mu})$ if and only if $f_i=g_i$ for all $1\leq i\leq r+\mu$. There are $\lambda_{x_{i_1}\cdots x_{i_r}}$ choices for $f_1,\cdots, f_r$. Next we discuss the choice of $f_{r+1},\cdots, f_{r+\mu}$ with largest terms $i_{r+1},\cdots,i_{r+\mu}$ such that $f = Q(f_1 \cdots  f_{r+\mu})\in C(\MI)$ is proper and the linear independence condition is satisfied. We first assign the coefficients in $f_{r+1}$, then $f_{r+2}$ and so on. To guarantee the linear independence condition, we need to ensure that
\begin{equation}\label{eq:3.4.1}
f_{r+k} = \sum_{t=1}^{r-\mu} a_t f_t + \sum_{t =r-\mu+1}^{r} a_t f_t + \sum_{t =r+1}^{r+k-1} a_t f_t + a_0.
\end{equation}
does not hold for any $a_i\in \FF_2$.

Now we consider two cases for $u$:

1) $i_{r-\mu+j} \neq i_{r+j}$ for some $1\leq j\leq \mu$. In this case,  $Q(f_1 \cdots  f_{r+\mu})\in C(\MI)$ if and only if $x_{i_1}\cdots x_{i_r}$ and $x_{i_1}\cdots x_{i_{r-\mu}} x_{i_{r+1}}\cdots x_{i_{r+\mu}}$ belong to $\MI$. Next, we calculate the choices of $f_{r+1},\cdots, f_{r+\mu}$ satisfying the linear independence condition. 

Assume we have already determined $f_{r+j}$ for $1\leq j < k$. There are two cases for $f_{r+k}$:
 
I) $i_{r+k} \neq i_{r-j}$ for all $0\leq j\leq \mu-1$. Assume, by contradiction, Equation (\ref{eq:3.4.1}) holds. 

For $1\leq t\leq r-\mu$, since $x_{i_t}$ only appears in $f_t$, $a_t$ must be 0\added{, so we can delete them from the equation:}

$$
\added{f_{r+k} = \sum_{t =r-\mu+1}^{r} a_t f_t + \sum_{t =r+1}^{r+k-1} a_t f_t + a_0.}
$$

For $r-\mu+1 \leq t\leq r$ and $ i_t > i_{r+k}$, since $x_{i_t}$ does not appear in other $f_l$ with $r-\mu+1\leq l\leq r+k-1$, $a_t$ must be 0. \added{Since $i_{r+k} \neq i_{r-j}$ for all $0\leq j\leq \mu-1$, the equation becomes:}

$$
\added{f_{r+k} = \sum_{t =r-\mu+1, \atop i_t < i_{r+k}}^{r} a_t f_t + \sum_{t =r+1}^{r+k-1} a_t f_t + a_0.}
$$

\replaced{Now all the $f_t$ remained in the equation satisfy}{For other $1 \leq t\leq r+k-1$, we have} $i_t < i_{r+k}$, and $f_{r+k}$ cannot be linear combination of these linear polynomials. 

Therefore, we conclude that $f_{r+k}$ cannot be a linear combination of any $f_j$ where $1\leq j\leq r+k-1$, so the linear independence condition is always satisfied in this case.

The coefficients of $x_{i_j}$ with $1\leq j\leq r-\mu$ or $ r+1\leq j< r+k$ in $f_{r+k}$ must be 0 to satisfy the proper condition. \replaced{Since $\varphi_{u,[r+1,r+k-1]}(i_{r+k}) = k-1$}{Thus}, there are $2^{\added{\bar{\varphi}_{u,[1,r-\mu]}(i_{r+k})}- k + 1}$ choices for $f_{r+k}$.

II) $i_{r+k} = i_{r-j}$ for some $0\leq j\leq \mu-1$. Assume Equation (\ref{eq:3.4.1}) holds. 

Similarly,  $a_t$ must be 0 when $1\leq t\leq r-\mu$, or $r-\mu+1 \leq t\leq r$ with $ i_t > i_{r+k}$. \added{Since $i_{r+k} = i_{r-j}$, the equation becomes:}

$$
\added{f_{r+k} = a_{r-j}f_{r-j} + \sum_{t =r-\mu+1, \atop i_t < i_{r+k}}^{r} a_t f_t + \sum_{t =r+1}^{r+k-1} a_t f_t + a_0.}
$$

Now $x_{i_{r+k}}$ only appears in $f_{r+k}$ and $f_{r-j}$, so $a_{r-j}$ must be 1. For $r-\mu+1 \leq t\leq r$ and $ i_t < i_{r+k}$, $a_t$ can be assigned arbitrarily. And for $r-1 \leq t\leq r+k-1$, since the form is proper, $a_t$ must be assigned so that $x_{i_t}$ does not appear in $f_{r+k}$. 

Therefore, the free coefficients are $a_0$ and $a_t$ with $r-\mu+1 \leq t\leq r$ and $ i_t < i_{r+k}$. Since all the polynomials are linear independent, there are $2^{\added{\varphi_{u,[r-\mu+1,r]}(i_{r+k})}+1}$ choices for $f_{r+k}$ such that Equation (\ref{eq:3.4.1}) holds.

Therefore, there are $s_u(k) = 2^{\added{\bar{\varphi}_{u,[1,r-\mu]}(i_{r+k})}- k + 1}-2^{\varphi_{u,[r-\mu+1,r]}(i_{r+k})+1}$ choices for $f_{r+k}$. If $\added{\bar{\varphi}_{u,[1,r-\mu]}(i_{r+k})}- k \leq \added{\varphi_{u,[r-\mu+1,r]}(i_{r+k})}$, then \added{there does not exist any $f_{r+k}$ such that linear independence holds, so} $B_u = \varnothing$.

2) $i_{r-\mu+j} = i_{r+j}$ for all $1\leq j\leq \mu$. It should be noted that \added{in this case} $x_{i_1}\cdots x_{i_r}$ may not belong to $\MI$. Then $f_{r+1},\cdots,f_{r+\mu}$ must be chosen to be linearly independent and $Q(f_1,\cdots, f_{r+\mu})\in C(\MI)$. We first analyse $f_{r+1}$, then $f_{r+2}$ and so on. 

To guarantee $f\in C(\MI)$, notice that the coefficients larger than $b_u(k)$ in $f_{r+k}$ and $f_{r-\mu+k}$ must be equal, and then the monomials that do not belong to $\MI$ in $f_{r-\mu}\cdots f_r$ and $f_{r+1}\cdots f_{r+\mu}$ will cancel each other out. Therefore, the number of free coefficients of $f_{r+k}$ is $\added{\bar{\varphi}_{u,[1,r]}(b_u(k))}+1$. Next we analyse the number of linearly dependent choices. Similar to the first case, assume Equation (\ref{eq:3.4.1}) holds. Since $x_{i_{r+j}}$ with $1\leq j < k$ only appears in $f_{r+j}$ and $f_{r+j-\mu}$, we have $a_{r+j} = a_{r-\mu+j}$. Then there are $2^k$ choices for $f_{r+k}$ such that Equation (\ref{eq:3.4.1}) holds.

Therefore, there are $s_u(k) = 2^{\added{\bar{\varphi}_{u,[1,r]}(b_u(k))}+1}-2^k$ choices for $f_{r+k}$. If $\added{\bar{\varphi}_{u,[1,r]}(b_u(k))} + 1\leq k$, $B_u = \varnothing$. 

In conclusion, since there are $\lambda_{x_{i_1}\cdots x_{i_r}}$ choices for $f_1,\cdots, f_r$ and $\sigma_u$ choices for $f_{r+1},\cdots,f_{r+\mu}$, we prove Equation (\ref{eq6}). Equation (\ref{eq7}) follows from Equation (\ref{eq6}) and Theorem \ref{thm4}.
\end{proof}

\end{document}